\title{Validation of point process predictions with proper scoring rules}
\author[1]{Claudio Heinrich-Mertsching}
\author[1]{Thordis L. Thorarinsdottir}
\author[1]{Peter Guttorp}
\author[2]{Max Schneider}
\affil[1]{Norwegian Computing Center\\
Postboks 114, Blindern\newline
NO-0314 Oslo, Norway}
\affil[2]{
Department of Statistics\newline
Box 354322\newline
University of Washington\newline
Seattle, WA 98195, United States of America
}
\runningauthor{C. Heinrich-Mertsching et al.}
\newcolumntype{C}{@{\extracolsep{3em}}c@{\extracolsep{1em}}}%
\newcommand{\wh}{\widehat}
\newcommand{\ol}{\overline}
\newcommand{\R}{\mathbbm{ R}}
\newcommand{\E}{\mathbbm{ E}}
\newcommand{\bb}{\mathbf}
\DeclareMathOperator*{\argmin}{arg\,min}
\begin{document}

\maketitle






\begin{abstract}
We introduce a class of proper scoring rules for 
evaluating spatial point process forecasts based on summary statistics. These scoring rules rely on Monte-Carlo approximations of expectations and can therefore easily be evaluated for any point process model that can be simulated. In this regard, they are more flexible than the commonly used logarithmic score and other existing proper scores for point process predictions. The scoring rules allow for evaluating the calibration of a model to specific aspects of a point process, such as its spatial distribution or tendency towards clustering. Using simulations we analyze the sensitivity of our scoring rules to different aspects of the forecasts and compare it to the logarithmic score. Applications to earthquake occurrences in northern California, USA and the spatial distribution of Pacific silver firs in Findley Lake Reserve in Washington, USA highlight the usefulness of our scores for scientific model selection. 
\keywords{point process, proper scoring rule, forecast validation}
 \end{abstract}



\section{Introduction}

Point process methodology is applied in diverse scientific fields to model and predict earthquakes \citep{Eberhard&2012}, crime rates \citep{Mohler&2011}, urban development \citep{PourtaheriVahidi-Asl2011}, plant and cellular systems \citep{Waller&2011}, and animal colonies \citep{Edelman2012}, to name but a few examples. In prediction settings, model validation methods are needed to rank competing models according to their predictive performance. Such rankings are typically obtained by proper scoring rules \citep{GneitingRaftery2007}. A particular challenge for constructing scoring rules for point processes is that mathematical properties, such as densities or moment measures, are untractable for many point process models. This obstructs the use of scoring rules based on such properties. The common logaritmic score, for example, can only be applied to compare predictions for which densities are known, which is not the case for many Markov point processes.

In this paper, we introduce a class of proper scoring rules that can be evaluated from random draws of the predictive model. Simulation methods exist for most point process models \citep{MoellerWaagepetersen2003}, including models for which densities and other mathematical properties are unknown. Therefore, simulation-based scoring rules provide a flexible approach to rank competing point process predictions. Moreover, they allow comparison of ensemble-based point pattern predictions---frequently used in atmospheric sciences, for example for predicting lightning strikes \citep{Blouin&2016}---to fully probabilistic point process models.

The scoring rules we propose are based on estimators of summary statistics such as the intensity or Ripley's $K$-function \citep{Ripley1976, Ripley1977}. They are therefore sensitive to misspecifications of the selected summary statistic in the predictive model. In particular, scores are obtained that are sensitive to either misspecified clustering behavior or misspecifications of the intensity. Earthquake rate forecasting \citep{Schorlemmer&2018} or the spatial prediction of the distribution of species \citep{Velazquez&2016} are prominent examples of point-process-valued predictions. With these examples in mind, we focus our exposition on spatial point processes.  We analyze the introduced scores in several simulation studies and apply them to two datasets; earthquake occurrences in northern California and the spatial distribution of Pacific silver fir at Findley Lake Reserve in Washington. Despite this focus on purely spatial processes, our approach extends naturally to temporal and spatio-temporal point processes.

Our proposed construction of proper scores is conceptually simple and based on the fact that proper scores remain proper under measurable mappings. This approach reduces the complex task of finding proper scores for point processes to the much simpler task of finding mappings from the space of point patterns to a simpler space were scoring rules are readily available.  We believe this approach to generally be useful for constructing proper scoring rules for predictions taking values in highly complex spaces, such as spatial or function-valued predictions. A previous application of this principle is the construction of the variogram score for spatial predictions in \cite{ScheuererHamill2015}. For this approach it is important to employ mappings that are sensitive to certain high-level-properties of the process, in order to make the derived scores sensitive to these properties as well and, therefore, interpretable. In the context of point processes, examples for such high-level-properties are the clustering behavior or the expected number of points per area. Such properties are usually analyzed by summary statistics such as Ripley's $K$-function or the intensity of the process.
We therefore employ estimators of summary statistics as mappings.

Previously proposed model validation methods for point processes include the pixel-based residual diagnostic framework of \cite{Baddeley&2005} and \cite{Zhuang2006}, pattern transformation methods as reviewed in \cite{Clements&2011}, and diagnostics based on summary statistics such as the Ripley's $K$-function \citep{Baddeley&2000,Baddeley&2011}. Such model diagnostics typically assess how well a given model explains an observed point pattern. Proper scoring rules, on the other hand, compare the performance of competing models in a prediction setting, where models are fitted out-of-sample. They can, moreover, be used as loss functions to fit parametric models, by selecting parameters of a parametric model optimizing the score in cross-validation, see \citet{Dawid&2016} and Section \ref{Earthquakes}.

In a recent publication, \citet{Brehmer&2021} give a comprehensive overview of forecast evaluation methods for point processes and put them in the context of scoring rules. They moreover introduce several proper scoring rules for point processes. The present paper complements their work, in that the scoring rules we introduce are very flexible and can be evaluated for a wider range of predictive models than the scores considered in \citet{Brehmer&2021}. In particular, our scores can be evaluated for Markov or Gibbs processes, many of which do not possess closed form expressions for summary statistics or densities. Moreover, the $K$-function score introduced here is, to the best of our knowledge, the first proper scoring rule specifically targeting point interaction. On the other hand, the evaluation of our scores relies on Monte-Carlo-approximation and therefore introduces score uncertainty and high computational costs in a trade-off: Generating a small number of Monte-Carlo samples, for example, lowers computational cost at the price of increased score uncertainty \citep{Krueger&2021}. Simpler alternatives, such as the logaritmic score or the score from \citet[Cor. 3.8]{Brehmer&2021} (cf. equation \eqref{intensityscoreBrehmer}), may thus be preferred if they can be evaluated for all predictive models.

The remainder of this article is organized as follows. Section \ref{Theory} contains the theoretical background, including brief introductions to proper scoring rules and point processes. In Section \ref{SR} we derive proper scoring rules for point processes based on summary statistics. Sections \ref{SS1} and \ref{SS2} provide simulation studies analyzing the performance of the introduced scores. The intensity score is then applied in Section \ref{Earthquakes} to earthquake predictions, where its performance can be compared to the logaritmic score. In Section \ref{Trees}, we predict the spatial distribution of Pacific silver fir by several point process models that are frequently used in ecology, and assess their performance by the $K$-function score. Section \ref{Discussion} concludes.


\section{Background and notation}\label{Theory}


\subsection{Scoring rules}

Forecasts commonly take the form of predictive distributions in order to incorporate the forecast uncertainty, or the confidence in the prediction. Scoring rules assess the accuracy of probabilistic forecasts by assigning a numerical penalty to each forecast-observation pair. Given a measurable observation space $\mathcal O$ and a set $\mathcal P$ of probability measures on $\mathcal O$, a scoring rule is a mapping
\begin{equation}\label{eq:score}
S: \mathcal O\times \mathcal P  \rightarrow \ol\R := \R\cup\{\infty\},
\end{equation}
such that the mapping $y \mapsto S(y,F)$ is integrable with respect to the measure $G$ for every $F,G \in \mathcal P$.
We assume scoring rules to be negatively oriented, interpreting the score as a penalty, such that smaller scores indicate better predictions. 

A scoring rule is {\em proper} relative to $\mathcal P$ if 
\begin{equation}\label{eq:proper}
\E_G S(Y,G) \leq \E_G S(Y,F)\quad \text{for all $F,G \in \mathcal P$,}
\end{equation}
that is, if the expected score for a random observation $Y$ with distribution $G$ is optimized if the true distribution is issued as the forecast. The scoring rule $S$ is {\em strictly proper} relative to the class $\mathcal P$ if \eqref{eq:proper} holds, with equality only if $F=G$. The concept of propriety is rooted in decision-theory and the aim of performing forecast evaluation with proper scoring rules is to encourage honesty and prevent hedging, see e.g. the discussion in Section 1 of \citet{Gneiting2011}. 

Two examples of commonly used scoring rules are the proper squared error (SE),
\begin{equation}\label{eq:MSE} S(y,F) = (y-\E_F[X])^2,
\end{equation}
and the strictly proper continuous ranked probability score (CRPS),
\begin{equation}\label{eq:CRPS}
  \text{CRPS}(y,F) := \E_F[|y-X|] -\frac 1 2\E_F[|X'-X|],
  \end{equation}
where $F$ is assumed to have finite first moment, and $X, X'$ are independent random variables with distribution $F$ \citep{Hersbach2000, LaioTamea2007, GneitingRaftery2007}. 

Competing forecasting methods can be compared by evaluating their mean scores over an out-of-sample test set, and the method with the smallest mean score is preferred. For a small set of forecast-observation pairs, the mean score is commonly associated with a large uncertainty, see \citet{ThorarinsdottirSchuhen2018}. Formal tests of the null hypothesis of equal predictive performance can also be employed, such as the Diebold-Mariano test \citep{DieboldMariano1995} or permutation tests \citep{Good2013, Moeller&2013}.

\subsection{Point processes}

A spatial point process on a bounded set $W\subset \R^2$ is a random variable $\bb X$ taking values in $W^\cup$, the space of countable subsets of $W$. We generally assume that $\bb X$ almost surely has finitely many points.
We denote by $F$ and $G$ distributions on $W^\cup$ of point processes, observed point patterns are denoted by $\bb x,\bb y,...$ while random draws from point processes are denoted by $\bb X,\bb Y,...$. 
For a function $f:W^\cup\to \R$, the notation $\E_F[f(\bb X)]$ is used to denote the expectation of $f(\bb X)$  when $\bb X$ is distributed according to $F$.
For comprehensive overviews on spatial point processes, we refer to \cite{MoellerWaagepetersen2003,DaleyVereJones2007}. 

Summary statistics of point processes are powerful tools for exploratory data analysis and model selection. 
We define summary statistics as function-valued maps from the space of point process distributions, see Definition \ref{def:SumStat} below. This definition includes the intensity function in the class of summary statistics, which is uncommon but convenient for our purposes.
Two important examples of summary statistics are the following:
\begin{example}[Intensity function]   
 The intensity function $\lambda:W\to\R$ of a point process model $F$ is the unique function (up to null sets) satisfying
\[\int_B\lambda(w)\,dw = \E_F[n(\bb X \cap B)],\]
for all measurable sets $B\subset W$. Here, $n(\bb X\cap B)$ denotes the number of points of $\bb X$ that fall into the set $B$.
\end{example}
The intensity measures the spatial distribution of points in the sense that a high intensity highlights areas where many points are expected. 
Whereas Poisson point processes are fully defined by their intensity, the intensity contains no information about interaction of points, i.e. whether the points tend to repel each other or cluster. Such interactions are targeted by second order summary statistics, an example being Ripley's $K$-function:
\begin{example}[Ripley's $K$-function]   
For a point process $F$ with intensity $\lambda,$ Ripley's $K$-function is defined as
\[K(r) = \frac{1}{|W|} \E_F\bigg[\sum_{x_1,x_2\in\bb X,\atop x_1\neq x_2} \frac{\mathbb{ 1}\{\|x_1 - x_2\|<r\}}{\lambda(x_1)\lambda(x_2)}\bigg],\]
for $r>0.$ 
\end{example}
We generally consider this generalization of the $K$-function to spatially inhomogeneous point processes, which was introduced in \cite{Baddeley&2000}.
Roughly speaking, $K(r)$ indicates clustering at distances up to $r.$  The $K$-function of a Poisson process is $K(r) = \pi r^2.$ If, for a point process model, $K(r)$ is larger than this value for small $r$, the model has more expected point pairs with distance less than $r$ than a Poisson model, and the process exhibits clustering.
Other examples of popular summary statistics include the $F$-, $G$-, and $J$-functions as well as the pair correlation function, see \citet[Chapter 4]{MoellerWaagepetersen2003}.

Bearing these examples in mind, we define a general summary statistic as follows. Summary statistics are function-valued, taking values from a space $\mathcal R$. For the intensity function, for example, we have $\mathcal R = W$ and for the $K$-function, we have $\mathcal R = (0,\infty).$ 
\begin{definition}\label{def:SumStat}
Consider a class of predictive distributions $\mathcal P$ on $W^\cup$ and a measurable space $\mathcal R$.
A {\it summary statistic} on $\mathcal P$ is a mapping $T:\mathcal P\times \mathcal R\to\R$. We sometimes denote $T_F(r)$ instead of $T(F,r)$.
A {\it summary statistic estimator} is a mapping $\wh T:W^\cup\times \mathcal R\to\R$.
\end{definition}
In particular, we assume estimators for summary statistics to be based on a single point pattern, which is the case for all standard estimators for the summary statistics mentioned above.

Finally, let us remark that throughout this paper we assume all mappings between measurable spaces to be measurable. Products of measurable spaces are equipped with the product $\sigma$-algebra. For mappings $\phi:\mathcal P\times \mathcal M\to\mathcal M'$, where $\mathcal M,\mathcal M'$ are measurable spaces and $\mathcal P$ is a space of distributions, we assume that $\phi(F,\cdot):\mathcal M\to\mathcal M'$ is measurable for all $F\in \mathcal P.$

\section{Proper scoring rules for point process predictions}\label{SR}

When dealing with forecasts taking values in a complex observation space $\mathcal O$, it may be effective to validate and compare models by focusing on a certain property of interest. This approach is not new; in the context of real- and vector-valued forecasts, the SE \eqref{eq:MSE} focuses on the mean and the Dawid-Sebastiani score \citep{DawidSebastiani1999} focuses on mean and covariance of the predictive distribution, for example. The variogram score \citep{ScheuererHamill2015} focuses on how the spatial autocorrelation of a spatial prediction decays with distance. Here, we adapt this principle and show how it can be applied to validate point process forecasts.

\subsection{Scoring based on summary statistics}

In point process prediction, researchers often focus on the number and spatial distribution of points or clustering behavior of the process. We can use summary statistics for such specific properties to construct proper scoring rules sensitive to the same property. This approach has several advantages. It is easily applicable and does not impose any conditions on the predictive distribution. Thus, it can be used to directly compare predictive performance of any collection of point process models. Secondly, the derived scoring rules are always proper, and therefore allow for easy comparison of predictive perfomance following decision-theoretic principles.

\begin{proposition}\label{unbiased}
Let $r\in \mathcal R$ be fixed. Assume that $\wh T$ is an unbiased estimator for $T$ in the sense that $\E_F[\wh T (\bb Y,r)] = T(F,r)$ for all $F\in \mathcal P$. Then the scoring rule
\[S_T(\bb y , F,r) := (\wh T(\bb y,r) - T(F,r))^2\]
is proper relative to the class of all point process models.
\end{proposition}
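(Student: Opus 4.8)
The plan is to recognize $S_T(\cdot,\cdot,r)$ as the squared-error score \eqref{eq:MSE} transported from the real line to the space of point patterns via the measurable map $\bb y\mapsto\wh T(\bb y,r)$, and then to establish propriety through the familiar bias--variance identity. First I would fix $r\in\mathcal R$ and introduce, for $\bb X\sim F$, the real-valued random variable $\wh T(\bb X,r)$; by the unbiasedness hypothesis $\E_F[\wh T(\bb X,r)] = T(F,r)$, so $T(F,r)$ is precisely the mean of the pushforward of $F$ under $\wh T(\cdot,r)$. In this language $S_T(\bb y,F,r) = (\wh T(\bb y,r)-\E_F[\wh T(\bb X,r)])^2$ is exactly the SE score evaluated at the transformed observation $\wh T(\bb y,r)$ against a forecast whose mean is $T(F,r)$, and propriety of SE for the mean functional is the reason the claim should hold.

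Next, take arbitrary $F,G$ in the class of point process models and let $\bb Y\sim G$. The key computation is the decomposition
\[
\E_G\big[(\wh T(\bb Y,r)-T(F,r))^2\big] = \mathrm{Var}_G\big(\wh T(\bb Y,r)\big) + \big(\E_G[\wh T(\bb Y,r)]-T(F,r)\big)^2,
\]
valid whenever the left-hand side is finite. Applying unbiasedness once more gives $\E_G[\wh T(\bb Y,r)] = T(G,r)$, so the right-hand side equals $\mathrm{Var}_G(\wh T(\bb Y,r)) + (T(G,r)-T(F,r))^2$. Specializing to $F=G$ yields $\E_G[S_T(\bb Y,G,r)] = \mathrm{Var}_G(\wh T(\bb Y,r))$, and since $(T(G,r)-T(F,r))^2\geq 0$ we obtain $\E_G[S_T(\bb Y,G,r)]\leq\E_G[S_T(\bb Y,F,r)]$ for every $F$, which is exactly \eqref{eq:proper}.

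Two minor points need attention along the way. First, measurability and integrability: the map $\bb y\mapsto\wh T(\bb y,r)$ is measurable by the blanket assumption of the paper, and for the expectations and the variance decomposition above to be meaningful one needs $\wh T(\bb Y,r)$ to have finite second moment under each model; this is the implicit regularity under which the statement is to be read, in analogy with the finite-moment conditions attached to \eqref{eq:MSE} and \eqref{eq:CRPS}, and I would state it explicitly. Second, I would note that no structural assumption on $F$ or $G$ is used beyond this, so propriety holds relative to the full class of point process models as claimed. I do not anticipate a serious obstacle; the only thing to be careful about is not to over-claim --- $S_T$ compares only the scalar summaries $T(F,r)$ and $T(G,r)$, so it is not strictly proper and cannot distinguish models agreeing on the value of the summary statistic at $r$.
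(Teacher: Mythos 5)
Your argument is correct and is essentially the paper's own proof: the bias--variance decomposition you write out is precisely the fact, cited in the paper's one-line proof, that $c\mapsto \E[(Y-c)^2]$ is minimized at $c=\E[Y]$, applied to the random variable $\wh T(\bb Y,r)$ with $c=T(F,r)$ and combined with unbiasedness. Your side remarks on finite second moments and on the score not being strictly proper are also consistent with the paper's surrounding discussion.
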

\begin{proof}
This follows directly from the fact that for any random variable $Y$, the function $c\mapsto \E[(Y-c)^2]$ is minimized in $c = \E[Y]$.
\end{proof}
The score $S_T$ is usually not strictly proper as we may have $T(F,r) = T(G,r)$ for distributions $F\neq G$. For example, \cite{Baddeley&1984} show that point processes with very different characteristics may have the same $K$-function.

In Proposition \ref{unbiased}, both $\wh T$ and $T$ get evaluated at a specific point $r\in\mathcal R$, whereas in practice one is usually more interested in an overall fit. To this end, we can use the following result, which is an immediate consequence of Tonelli's theorem.
\begin{proposition}\label{integral}
 If $S(\bb y, F,r)$ is a non-negative proper scoring rule relative to $\mathcal P$ for all $r\in \mathcal R$, then the scoring rule
\begin{align}\label{S_A}
S(\bb y ,F):=\int_{\mathcal R} S(\bb y ,F,r)w(r)dr
\end{align}
is proper relative to $\mathcal P$, for any non-negative function $w$.
\end{proposition}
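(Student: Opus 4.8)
The plan is to verify the propriety inequality \eqref{eq:proper} directly, interchanging the integration over $\mathcal R$ with the expectation $\E_G$ by means of Tonelli's theorem, which applies precisely because the integrand $S(\bb y, F, r)\, w(r)$ is non-negative. Fix $F, G \in \mathcal P$ and let $\bb Y$ have distribution $G$. First I would write
\begin{align*}
\E_G[S(\bb Y, F)] = \E_G\bigg[\int_{\mathcal R} S(\bb Y, F, r)\, w(r)\, dr\bigg] = \int_{\mathcal R} \E_G\big[S(\bb Y, F, r)\big]\, w(r)\, dr,
\end{align*}
where the second equality is Tonelli's theorem: the map $(\bb y, r) \mapsto S(\bb y, F, r)\, w(r)$ is non-negative and jointly measurable on $W^\cup \times \mathcal R$, so the order of the two integrations may be swapped, with both sides allowed to equal $+\infty$.

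Next I would invoke the hypothesis that, for each fixed $r \in \mathcal R$, the scoring rule $S(\cdot, \cdot, r)$ is proper relative to $\mathcal P$; this gives the pointwise inequality $\E_G[S(\bb Y, G, r)] \le \E_G[S(\bb Y, F, r)]$ for every $r$. Multiplying through by $w(r) \ge 0$ and integrating over $\mathcal R$ — using only monotonicity of the integral — yields
\begin{align*}
\int_{\mathcal R} \E_G\big[S(\bb Y, G, r)\big]\, w(r)\, dr \;\le\; \int_{\mathcal R} \E_G\big[S(\bb Y, F, r)\big]\, w(r)\, dr.
\end{align*}
Applying the Tonelli identity of the first step once more, now with $G$ in the role of the forecast on the left-hand side, rewrites the two sides as $\E_G[S(\bb Y, G)]$ and $\E_G[S(\bb Y, F)]$, respectively, which is exactly \eqref{eq:proper}. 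This proves that $S(\bb y, F)$ defined in \eqref{S_A} is proper relative to $\mathcal P$.

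The argument has no real obstacle; the only point demanding care is the measure-theoretic bookkeeping that makes Tonelli's theorem applicable. One needs joint measurability of $(\bb y, r) \mapsto S(\bb y, F, r)$ with respect to the product $\sigma$-algebra on $W^\cup \times \mathcal R$, and the reference measure implicit in the notation $\int_{\mathcal R} \cdots\, dr$ must be $\sigma$-finite — Lebesgue measure in the applications, e.g. $\mathcal R = (0,\infty)$ for the $K$-function, which is $\sigma$-finite. Both are guaranteed under the blanket measurability conventions of Section~\ref{Theory}, so no additional hypotheses are needed. I would also note in passing that $S(\bb y, F)$ may take the value $+\infty$, consistently with the codomain $\ol\R$ in \eqref{eq:score}; this does not affect propriety, since \eqref{eq:proper} remains a meaningful inequality in $\ol\R$.
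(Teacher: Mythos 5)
Your argument is correct and follows exactly the route the paper intends: the text presents Proposition~\ref{integral} as ``an immediate consequence of Tonelli's theorem,'' and your write-up simply spells out that application — swapping $\E_G$ with $\int_{\mathcal R}\cdots w(r)\,dr$ via non-negativity and joint measurability, then integrating the pointwise propriety inequality. Your added remarks on measurability, $\sigma$-finiteness, and the possible value $+\infty$ are consistent with the paper's blanket conventions and with the codomain $\ol\R$ in \eqref{eq:score}.
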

The assumption of non-negative $S$ ensures the existence of the integral (which may be infinite, cf \eqref{eq:score}).  Together, Proposition \ref{unbiased} and \ref{integral} readily allow the construction of proper scoring rules based on summary statistics, as long as their estimators are unbiased. 
\begin{example}[$\mathcal F$-function]
The $\mathcal F$- or empty-space-function is defined for stationary point processes as the distribution function of the distance from the origin to the nearest point in a sampled point pattern.
It has the following unbiased estimator:
\[\wh {\mathcal F}(\bb y,r) := \sum_{x\in I_r}\frac{\mathbb{ 1}\{d(x,\bb y)\leq r\}}{\# I_r},\]
where $I$ is any finite regular grid of points, $I_r := I\cap W_{\circ r}$, and $W_{\circ r} = \{w\in W\,:\, b(w,r)\subset W\}$, see \citet[section 4.3]{MoellerWaagepetersen2003}. 
By Propositions \ref{unbiased} and \ref{integral},  the scoring rule
 \[S_{\mathcal F}(\bb y, F) := \int_\R(\wh {\mathcal F}(\bb y, r) - {\mathcal F}_F(r))^2\, w(r)dr,\]
is proper relative to the class of all stationary point process models, for any non-negative function $w$. Here, ${\mathcal F}_F$ denotes the (theoretical) $\mathcal F$-function under the point process model $F$.
\end{example}

\subsection{Scoring based on summary statistic estimators}

The construction of scoring rules above is quite intuitive, as it compares the estimator $\wh T$ to the true theoretical value $T_F$ under the predictive distribution. However, it is limited by two major restrictions. Firstly, for many summary statistics (such as, for example, the $K$-function and the intensity function) there are no unbiased estimators, see \citet[Chapter 4]{MoellerWaagepetersen2003}. Secondly, even if unbiased estimators exist, closed form expressions for the theoretical value $T_F$ may not be available for all point process models $F$. Both restrictions can be overcome by replacing $T_F$ by $\wh T(F)$, the pushforward probability measure of $F$ under the estimator $\wh T$. The following proposition is a direct consequence of the change-of-variables formula.
\begin{proposition}\label{biased}
Let $r\in \mathcal R$ be fixed. Denote by $\wh T(F,r)$ the pushforward distribution of $F$ under $\wh T(\cdot,r):W^\cup\to \R$. Consider a scoring rule $S$ on $\R$ that is proper relative to $\wh T(\mathcal P) := \{\wh T(F,r)\, ,\, F\in\mathcal P,r\in\mathcal R\}$. Then, the scoring rule
\[S_{\wh T}(\bb y , F) := S(\wh T(\bb y,r), \wh T(F,r))\]
is proper relative to $\mathcal P$.
\end{proposition}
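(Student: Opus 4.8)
The plan is to reduce propriety of $S_{\wh T}$ relative to $\mathcal P$ directly to propriety of $S$ relative to the image class $\wh T(\mathcal P)$, using the change-of-variables formula for pushforward measures. First I would fix $F,G\in\mathcal P$ and unfold the definition: with $\bb Y\sim G$,
\[
\E_G\bigl[S_{\wh T}(\bb Y,G)\bigr] = \E_G\bigl[S(\wh T(\bb Y,r),\wh T(G,r))\bigr].
\]
The key observation is that the random variable $Z:=\wh T(\bb Y,r)$ has law exactly $\wh T(G,r)$, the pushforward of $G$ under the measurable map $\wh T(\cdot,r)\colon W^\cup\to\R$. Hence, applying the change-of-variables (law of the unconscious statistician) identity to the integrand $z\mapsto S(z,\wh T(G,r))$, and likewise with $\wh T(F,r)$ in the second slot, gives
\[
\E_G\bigl[S(\wh T(\bb Y,r),\wh T(G,r))\bigr]=\E_{Z\sim \wh T(G,r)}\bigl[S(Z,\wh T(G,r))\bigr],\qquad
\E_G\bigl[S(\wh T(\bb Y,r),\wh T(F,r))\bigr]=\E_{Z\sim \wh T(G,r)}\bigl[S(Z,\wh T(F,r))\bigr].
\]

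Next I would invoke propriety of $S$. Since both $\wh T(G,r)$ and $\wh T(F,r)$ belong to $\wh T(\mathcal P)$ by construction, and $S$ is proper relative to $\wh T(\mathcal P)$, the defining inequality \eqref{eq:proper} yields $\E_{Z\sim \wh T(G,r)}[S(Z,\wh T(G,r))]\le \E_{Z\sim \wh T(G,r)}[S(Z,\wh T(F,r))]$. Chaining this with the two displays above gives $\E_G[S_{\wh T}(\bb Y,G)]\le \E_G[S_{\wh T}(\bb Y,F)]$, which is precisely propriety of $S_{\wh T}$ relative to $\mathcal P$. Before this I would also record that $S_{\wh T}$ is a genuine scoring rule in the sense of \eqref{eq:score}, i.e. that $\bb y\mapsto S_{\wh T}(\bb y,F)$ is $G$-integrable: this again follows from the same change-of-variables identity together with the integrability built into $S$ being a scoring rule on $\R$ relative to $\wh T(\mathcal P)$, plus the standing measurability assumptions on $\wh T(\cdot,r)$ and $S$.

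I do not anticipate a serious obstacle; the argument is essentially a one-line application of pushforward change of variables. The only points requiring a little care are bookkeeping: keeping the fixed evaluation point $r$ consistent in both arguments of $S$, and checking that the two relevant measures $\wh T(F,r)$ and $\wh T(G,r)$ indeed lie in the class $\wh T(\mathcal P)$ over which $S$ is assumed proper. I would add a remark that if one additionally wants strictness, it suffices that $\wh T(\cdot,r)$ separates $\mathcal P$ (that is, $F\neq G$ implies $\wh T(F,r)\neq \wh T(G,r)$) and that $S$ is strictly proper on $\wh T(\mathcal P)$; the stated proposition claims only propriety, so this is not needed here.
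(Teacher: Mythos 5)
Your argument is correct and is exactly the route the paper takes: it presents Proposition \ref{biased} as ``a direct consequence of the change-of-variables formula,'' which is precisely your identification of the law of $\wh T(\bb Y,r)$ under $\bb Y\sim G$ with the pushforward $\wh T(G,r)$, followed by an application of propriety of $S$ on $\wh T(\mathcal P)$. The extra bookkeeping you mention (integrability, membership of $\wh T(F,r)$ and $\wh T(G,r)$ in $\wh T(\mathcal P)$) is sound and consistent with the paper's standing measurability assumptions.
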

Note that $S_{\wh T}$ is usually not strictly proper, even if $S$ is, since we might have $\wh T(F,r)=\wh T(G,r)$ for distributions $F\neq G$.  The key to rendering this result useful is the choice of the proper scoring rule $S$ on the real line. Note that we recover Proposition \ref{unbiased} if $\wh T$ is unbiased and we choose $S$ to be the SE in \eqref{eq:MSE}, $S(y,F) = (y-\E_F[X])^2.$ However, a preferable choice is the CRPS in \eqref{eq:CRPS} as it is {\it strictly} proper with respect to all distributions with finite first moment \citep{GneitingRaftery2007}. Choosing the CRPS allows to calculate $S_{\wh T}$ by Monte-Carlo approximation, without requiring knowledge of the pushforward measure $\wh T(F).$ When applying Proposition \ref{biased} with the CRPS formula in \eqref{eq:CRPS}, we obtain by the change-of-variables formula, supressing $r$ for brevity,
\begin{align*}
S_{\wh T}(\bb y,F) &= \E_{\wh T(F)}[|\wh T(\bb y) - X|] - \frac 1 2 \E_{\wh T(F)}[|  X' -  X|]\\
& = \E_{F}[|\wh T(\bb y) - \wh T(\bb X)|] - \frac 1 2 \E_{F}[|\wh T(\bb X') - \wh T(\bb X)|].
\end{align*}
Here, $X,X'$ are independent random variables with distribution $\wh T(F)$, and $\bb X'$ and $\bb X$ are independent point processes with distribution $F$. The latter expression can be approximated by Monte-Carlo sampling from the point process distribution $F$.

Another, somewhat surprising, advantage of this approach is that considering $\wh T(F)$ rather than $T(F)$ can increase the sensitivity of the score, in the following sense. It frequently happens that competing models have identical (theoretical) summary statistic $T(F)$: For example, homogeneous models with the same expected number of points all share the same intensity, and all Poisson models have identical $K$-functions. However, the estimated summary statistics $\wh T(F)$ are generally differently distributed for different models. Since the {\it strictly} proper CRPS is applied to $\wh T(F)$, misspecified models can be detected even if their theoretical summary statistics $T(F)$ fits the observation perfectly. For example, we show in the next section that the score based on estimators of the $K$-function assigns the lowest mean score to the correct model when several Poisson models are compared.

The main result of this section is the following corollary, applying Propositions \ref{integral} and \ref{biased} to the CRPS (which is non-negative).
\begin{corollary}[Summary statistic score]\label{sss}
Consider a non-negative weighting function $w:\mathcal R\to\R$ and an estimator for a summary statistic $\wh T$ that is integrable with respect to $F\otimes w(r)dr$ for all $F$ in $\mathcal P$. The scoring rule defined by 
\begin{align*}
S_{\wh T}(\bb y, F) &:= \E_{F}\bigg[\int_\mathcal R |\wh T(\bb y,r) - \wh T(\bb X,r)|w(r)\,dr\bigg]
 - \frac 1 2\E_{F}\bigg[\int_\mathcal R |\wh T(\bb X',r) - \wh T(\bb X,r)|w(r)\, dr\bigg]
\end{align*}
is proper relative to $\mathcal P$.
\end{corollary}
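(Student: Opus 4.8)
The plan is to combine Propositions \ref{integral} and \ref{biased} directly, with the CRPS playing the role of the base scoring rule $S$ on $\R$. First I would fix $r \in \mathcal R$ and apply Proposition \ref{biased} with $S$ taken to be the CRPS in \eqref{eq:CRPS}. Since the CRPS is proper relative to the class of all distributions on $\R$ with finite first moment, and since the integrability hypothesis on $\wh T$ (integrability with respect to $F\otimes w(r)dr$) guarantees in particular that $\wh T(F,r)$ has finite first moment for $w$-almost every $r$, Proposition \ref{biased} yields that
\[
S_{\wh T}(\bb y, F, r) := \text{CRPS}(\wh T(\bb y, r), \wh T(F,r))
\]
is a proper scoring rule relative to $\mathcal P$ for (almost) every fixed $r$. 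Using the change-of-variables identity already displayed in the text preceding the corollary, this equals $\E_F[|\wh T(\bb y, r) - \wh T(\bb X, r)|] - \tfrac12 \E_F[|\wh T(\bb X', r) - \wh T(\bb X, r)|]$, where $\bb X, \bb X'$ are independent with distribution $F$.

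Next I would observe that the CRPS is non-negative, so $r \mapsto S_{\wh T}(\bb y, F, r)$ is a non-negative proper scoring rule for each $r$, and Proposition \ref{integral} applies: integrating against the non-negative weight $w$ gives that
\[
\bb y \mapsto \int_{\mathcal R} S_{\wh T}(\bb y, F, r)\, w(r)\, dr
\]
is proper relative to $\mathcal P$. The final step is purely a matter of rewriting: substitute the change-of-variables expression for $S_{\wh T}(\bb y, F, r)$ into the integral and apply Tonelli's theorem (legitimate by non-negativity of the integrands, or by the stated integrability of $\wh T$) to interchange the $r$-integral with the expectations $\E_F$, producing exactly the two terms in the statement of the corollary.

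The only real subtlety — and the step I would be most careful about — is the measurability and integrability bookkeeping needed to justify the applications of Tonelli's theorem and to ensure the hypotheses of Propositions \ref{integral} and \ref{biased} are met uniformly enough in $r$. Concretely, one needs $(\bb x, r) \mapsto \wh T(\bb x, r)$ to be jointly measurable (assumed, by the blanket measurability convention in Section \ref{Theory}), and one needs $\int_{\mathcal R} \E_F[|\wh T(\bb X, r)|]\, w(r)\, dr < \infty$ so that all the pieces $\E_F[|\wh T(\bb y, r) - \wh T(\bb X, r)|]$ and $\E_F[|\wh T(\bb X', r) - \wh T(\bb X, r)|]$ are finite and integrable in $r$ — this is exactly the content of the integrability assumption on $\wh T$ with respect to $F \otimes w(r)\,dr$, via the triangle inequality. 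Granting this, propriety is inherited termwise from the fixed-$r$ CRPS and no genuinely new estimate is required; the corollary is essentially a bookkeeping consequence of the two preceding propositions.
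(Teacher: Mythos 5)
Your proposal is correct and follows exactly the route the paper intends: the corollary is obtained by applying Proposition \ref{biased} with the CRPS for each fixed $r$, rewriting via the change-of-variables identity displayed before the statement, and then invoking Proposition \ref{integral} together with Tonelli's theorem to integrate against $w$. Your additional care with the measurability and integrability bookkeeping is sound and only makes explicit what the paper leaves implicit.
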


Note that both expectations can be calculated from random draws of the predictive point process distribution $F$, by Monte-Carlo approximation. An advantage of these scores are the very weak assumptions of this corollary which are satisfied for almost all point process models and summary statistic estimates. The weighting function $w(r)$ can be used to focus the attention of the score to certain areas, for example clustering at short range, when the $K$-function is considered, or the spatial distribution of points in a certain subregion of $W$, when the intensity is considered. For assessing the overall fit we will set $w = 1$ throughout the studies in this paper.

In this construction, $\wh T$ can be any real- or function-valued mapping satisfying the condition of Corollary \ref{sss}. In particular, no connection of the summary statistic estimator to the underlying summary statistic is required. The estimator does not need to be unbiased.  Below we list two examples which will be used to show the efficacy of our approach for point process model evaluation.

\begin{example}[Intensity score]\label{Ex:KernelEstimator}
The intensity function $\lambda$ of a point process is typically estimated by kernel estimators. These estimators are generally biased, making it impossible to apply Proposition \ref{unbiased}. 
For a kernel $k$ (i.e. a density on $W$) and a bandwidth $\sigma>0$, the kernel intensity estimator is based on the rescaled kernel $k_\sigma(w):= k(w/\sigma)$. It is defined as
\begin{equation}\label{eq:lambda hat}
  \wh \lambda_\sigma (\bb y,w) = \sum_{y\in \bb y} k_\sigma(w-y)/c_{W,\sigma}(y),
  \end{equation}
where $c_{W,\sigma}$ are edge correction factors defined as $c_{W,\sigma}(y) = \int_W k_\sigma(w-y)\,dw$. By Corollary \ref{sss} the intensity score, defined as
\begin{align}\label{IntScore}
S_{\wh\lambda_\sigma}(\bb y, F) := &  \E_F\bigg[\int_W|\wh \lambda_\sigma (\bb y,w) - \wh \lambda_\sigma (\bb X,w)|\,dw\bigg] -\frac 1 2 \E_F\bigg[\int_W|\wh \lambda_\sigma (\bb X',w) - \wh \lambda_\sigma (\bb X,w)| dw\bigg],
\end{align}
constitutes a proper scoring rule. Unless stated differently, we choose $k$ to be an isotropic Gaussian kernel with standard deviation 1, such that $k_\sigma$ has standard deviation $\sigma$. 
\end{example}
Since this score targets the intensity function, it assesses, roughly speaking, whether the predictive distribution has the correct spatial distribution and number of points, but neglects point interactions. 
On the other hand, if we are more interested in whether a predictive model reflects point interaction correctly, we can consider an estimator for the $K$-function.
\begin{example}[$K$-function score]\label{Ex:KFunScore}
The standard estimator for Ripley's $K$-function is defined as 
\begin{equation}\label{eq:K hat}
  \wh K(\bb y,r):=  \sum_{y_1\neq y_2\in\bb y} \frac{\mathbb{1}\{|y_1-y_2| < r\}}{\wh\lambda(y_1)\wh\lambda(y_2)|W\cap W_{y_1-y_2}|},
  \end{equation}
where $W_{y_1-y_2}$ denotes the shifted set $W+y_1-y_2$, and $\wh \lambda$ is a kernel estimator for the intensity. 
Thus, we obtain the proper $K$-function score
\begin{align}\label{Kfsc}
S_{\wh K}(\bb y, F) := & \int_0^{R} \E_F[|\wh K (\bb y,r) - \wh K (\bb X,r)|]\,dr -\frac 1 2 \int_0^R\E_F[|\wh K (\bb X',r) - \wh K (\bb X,r)|] \,dr,
\end{align}
where $R$ is an upper limit that should be chosen small relative to the diameter of $W$.
\end{example}
As $\wh K$ is sensitive to point interaction, this scoring rule specifically targets correct representation of point interaction in the predictive model. On the other hand, it will be relatively insensitive to misspecification of the intensity function, and for example, struggle to differentiate between different Poisson processes, which have the same $K$-function, see the example is Section~\ref{SS1}.

\subsection{Other scoring rules }

Depending on the set of forecast models that are compared, other proper scoring rules might be available: The logarithmic score can be evaluated when all predictions have known densities, in particular when all competing models are Poisson. This score is {\it strictly} proper and tests for significance of score differences tend to have higher power than for other scores, see \cite{Lerch&2017} and Section~\ref{SS2}. Therefore, the logarithmic score should be employed when available. Moreover, \cite{Brehmer&2021} introduce several new scoring rules for point processes. However, as the authors remark, most of these also require knowledge of the density or other mathematical properties of the process.  For example, they show that the scoring rule
\begin{align}\label{intensityscoreBrehmer}
S_\lambda(F,\bb y) := -\sum_{y_i \in \bb y} \log(\lambda_F(y_i)) + n(\bb y)|\Lambda_F| + c(|\Lambda_F| -n(\bb y))^2
\end{align}
is proper for any $c>0$, where $|\Lambda_F|= \int_W \lambda_F(w)dw$, and $n(\bb y)$ denotes the number of points in the point pattern $\bb y$. Since this score evaluates the fit of the intensity of the predictive model, it provides an alternative to the intensity estimator score in \eqref{Ex:KernelEstimator}. Evaluating this score requires knowledge of the intensity for all considered prediction models. Intensities are known for most double stochastic Poisson processes, also known as Cox processes, but only for a few models of the Markov type \citep{MoellerWaagepetersen2003}.

Another option to create scoring rules sensitive to the intensity is by dividing the observation space $W$ into a large number $N$ of spatial bins. This reduces observed point patterns $\bb y \in W^\cup$ to counts of points per bin, and  proper scoring rules can be applied to the counts, see \citet{Brehmer&2021} for a discussion. This method is frequently applied in the evaluation of earthquake rate predictions \citep{Schorlemmer&2018}. It can be viewed as an application of Proposition \ref{biased}, since the binning constitutes a mapping from the space of point patterns into a simpler space where scoring rules are available. 

\subsection{Computational aspects}

In practice, for evaluating the scores proposed in Corollary \ref{sss}, $n$ realizations $T_1,...,T_n$ of $\wh T(\bb X), \bb X \sim F$ need to be generated in order to approximate the score by 
\[\frac{1}{n}\sum_{i = 1}^n\bigg[\int_\mathcal R |\wh T(\bb y,r) - T_i(r)|w(r)\,dr\bigg]
 - \frac 1 {2n(n-1)}\sum_{i\neq j}\bigg[\int_\mathcal R |T_i(r) -  T_j(r)|w(r)\, dr\bigg],\]
 where the integrals are calculated numerically. 
 This generates a trade-off in that increasing $n$ reduces the uncertainty of the Monte-Carlo approximation but increases
computation costs for the score evaluation. As a reference, the simulation study presented in Section \ref{SS2} uses $n = 300$ and requires the computation of 9000 scores in total. These calculations take approximately 15 minutes, running on a laptop with 32 GB RAM, using a single core of a quadcore processor with 2.9 GHz. 

In many point process applications, the number of available observations is limited, diminishing the drawback from using scoring rules that are computationally costly. In long-range earthquake forecasting, for example, evaluation periods typically include five years or more of earthquakes, and new observations therefore take a long time to materialize. In ecology, collecting observations usually requires field studies and work-intensive data labeling, similarly resulting in a small number of available observations. In such cases, it may be particularly important to aim to minimize the uncertainty of the Monte-Carlo approximation.


\section{Simulation study 1: Detecting differences in intensity and interaction}\label{SS1}

In this simulation study, we analyze some properties of the proposed intensity- and the $K$-function score.

\subsection{Competing models}

We consider the five different point process models listed in Figure \ref{models}. For study area, we consider the spatial window $W = [0,10]\times [0,10].$ The first two models (hP and hP+) are homogeneous Poisson processes with 50 and 60 expected points, respectively. The third model (ihP) is an inhomogeneous Poisson process with 50 expected points, and with an intensity that increases linearly in the distance from the lower left corner of the window. The fourth model (Str) is a homogeneous Strauss process, in which the points repel each other, and the typical point pattern is more regular than for the homogeneous Poisson process. The Strauss process is defined by its density
\[f(\bb x) = c\beta^{n(\bb x)}\gamma^{s_R(\bb x)},\]
where $c$ is an untractable normalizing constant, $\beta>0, R>0$, and $\gamma\in (0,1)$ are parameters, $n(\bb x)$ denotes the number of points in $\bb x$ and $s_R(\bb x)$ is the number of pairs of points in $\bb x$ with distance less than $R$. The value $R$ is the range of interaction between points, and $\gamma$ determines the strength of the interaction, with smaller $\gamma$ leading to  stronger inhibition between close points. We set $\gamma = 0.5$, $R = 1$ and $\beta = 1.15$, resulting in an expected number of points of approximately 50, the same as for the models hP and ihP. 
\begin{figure}[h]
  \centering
\includegraphics[width = 0.19\textwidth]{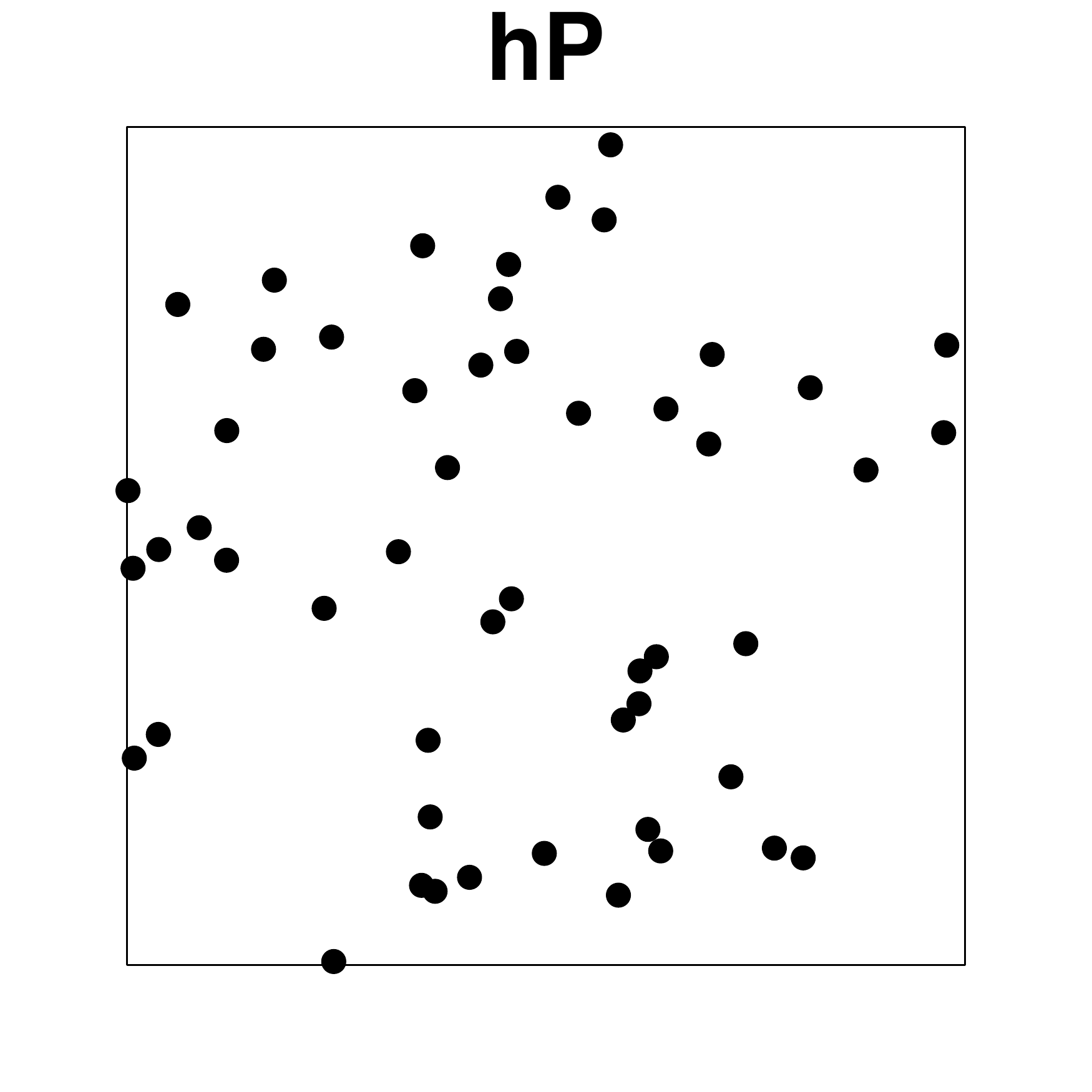}
\includegraphics[width = 0.19\textwidth]{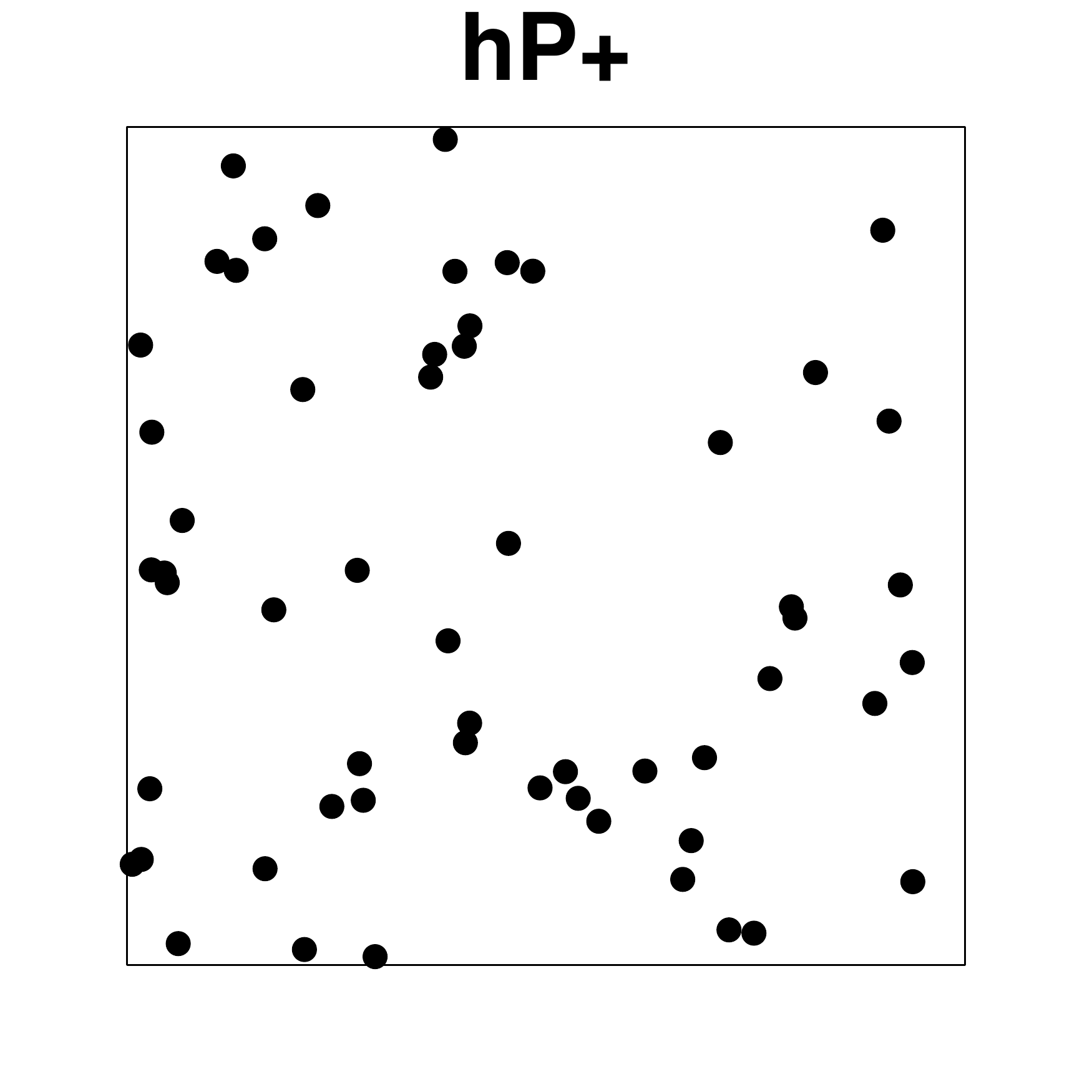}
\includegraphics[width = 0.19\textwidth]{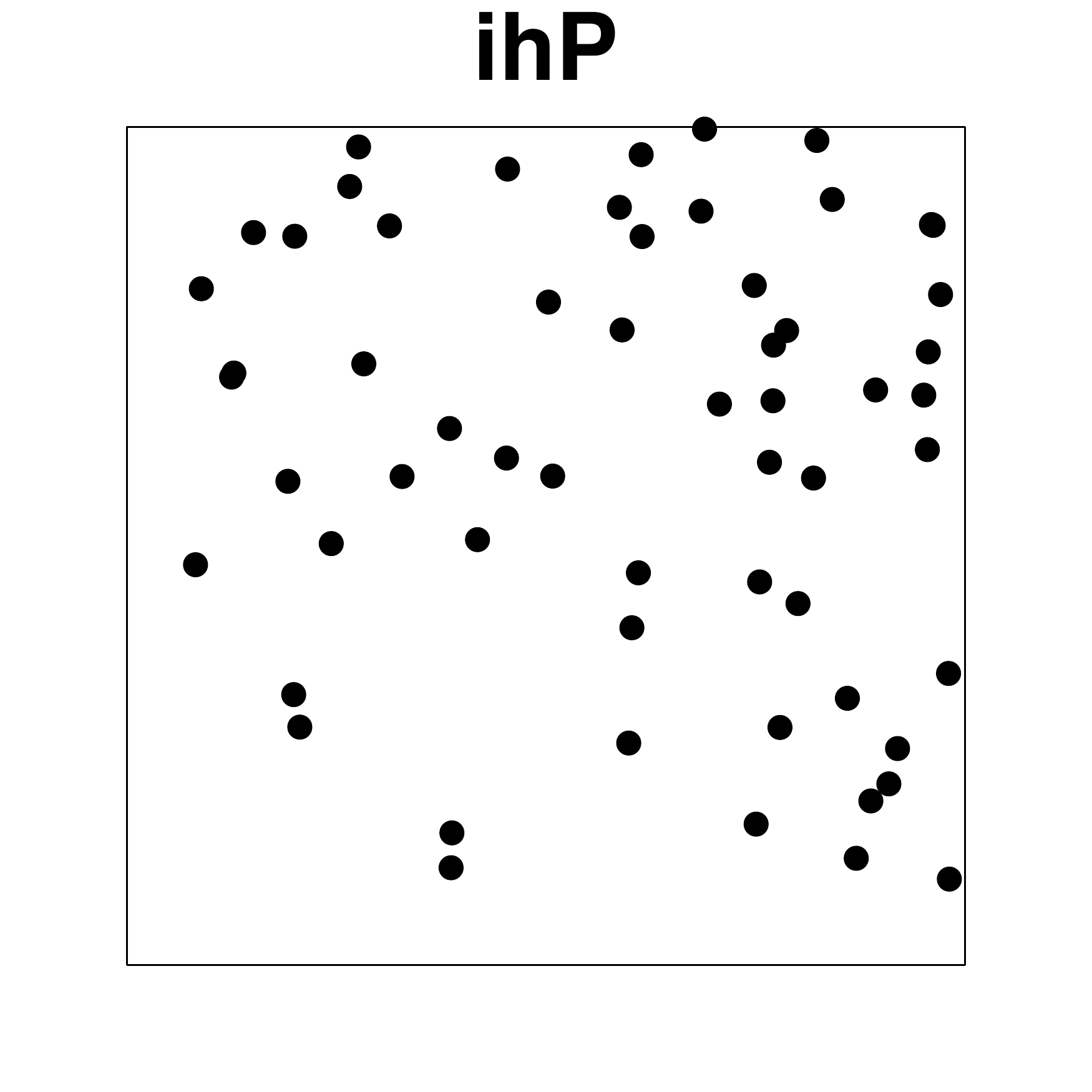}
\includegraphics[width = 0.19\textwidth]{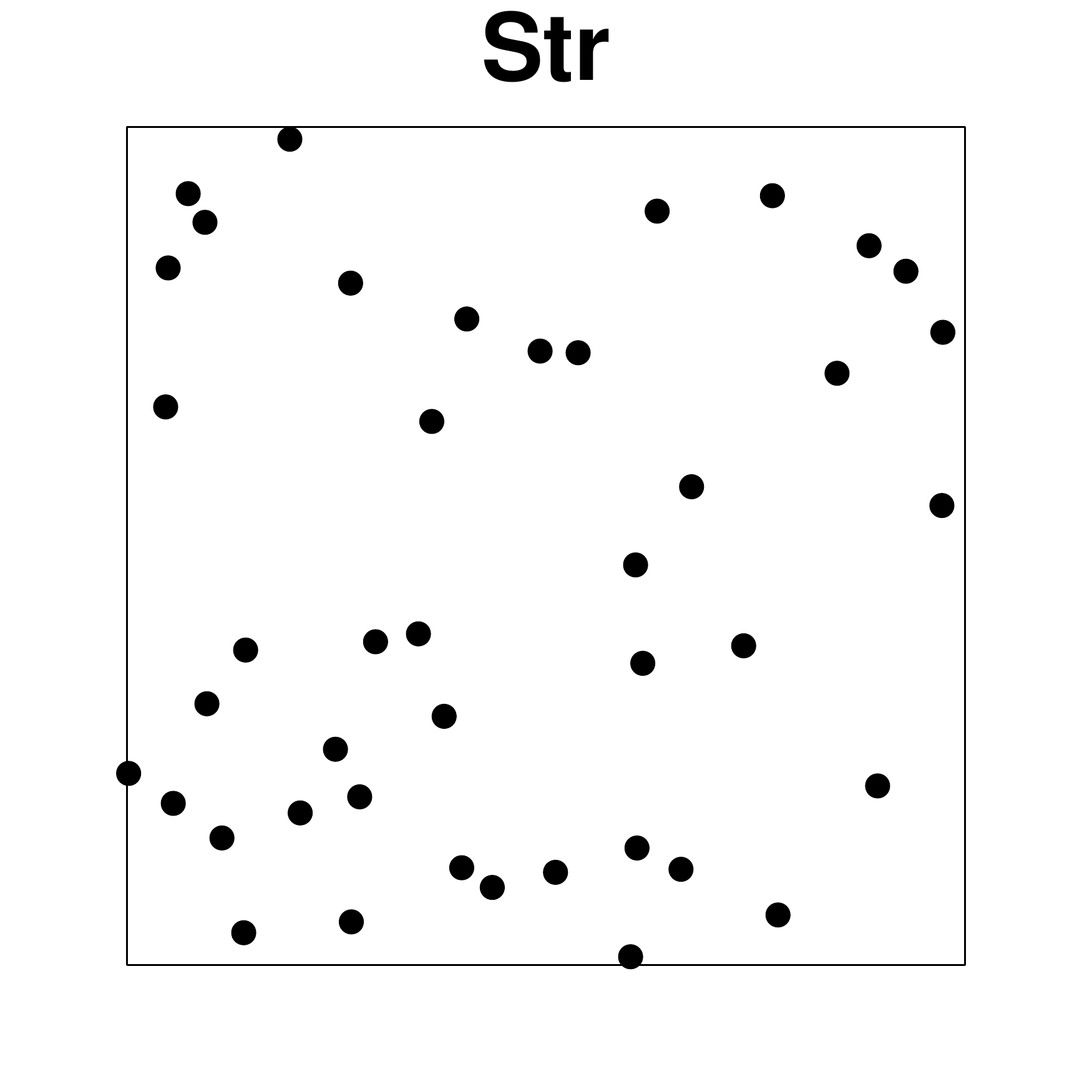}
\includegraphics[width = 0.19\textwidth]{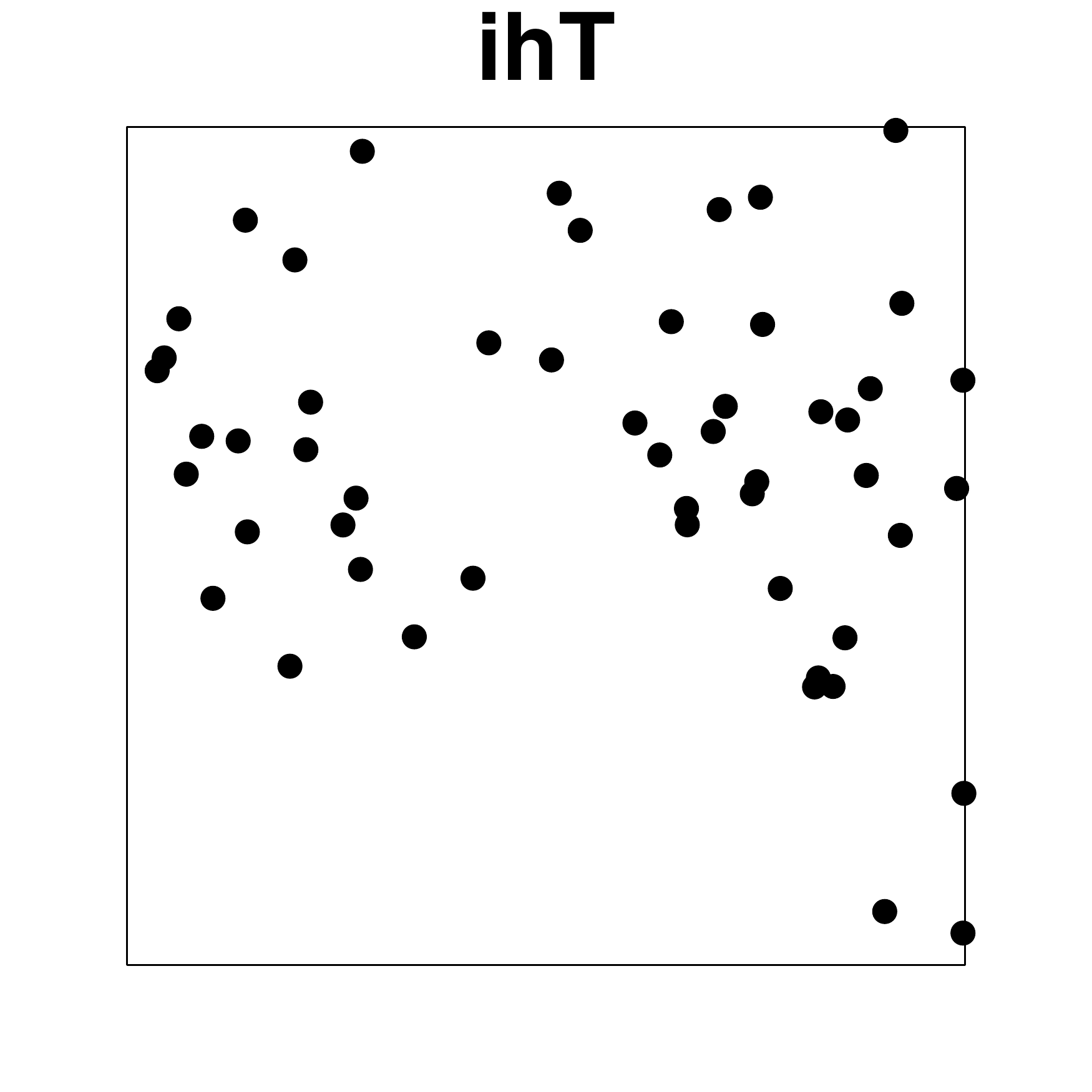}\\[1em]
\begin{tabular}{l|lllr}
Name	&			Model					& Intensity					& Point interaction 		& $\E[n(\bb X)]$	\\\hline
hP 	& 			homogeneous Poisson		& $ 1/2$			 		& none 						& 50				\\
hP+ &		homogeneous Poisson			& $ 3/5 $		& none						& 60				\\
ihP	&		inhomogeneous Poisson		& $\sim \sqrt{x^2+y^2}\ $	& none						& 50				\\
Str	&		homogeneous Strauss 		& $\approx 1/2$					& inhibition				& $\approx 50$		\\
ihT	& 		inhomogeneous Thomas 		& $\sim \sqrt{x^2+y^2}$		& clustering				& $\approx 50$
\end{tabular}\\[0.5em]
\caption{ Example plots (top row) and characteristics of the five models considered in the first simulation study. The spatial window is $[0,10]\times [0,10]$, i.e. for the inhomogeneous processes, the intensity increases with distance from the lower left corner.\label{models}}
\end{figure}
The fifth model is an inhomogeneous Thomas process (ihT), which is constructed by generating an (invisible) Poisson process of parent points, where each parent then generates a random number of offsprings that are spatially distributed according to a Gaussian kernel centered at the parent point. Here, we choose an inhomogeneous parent process with intensity equal to half of that of model ihP, the number of offsprings per parent is Poisson distributed with mean 2, and the standard deviation of the (homogeneous) Gaussian dispersion kernel is set to 0.5. By these choices, the Thomas process has an intensity similar to model ihP, and the number of expected points within the observation window is again approximately 50. As seen in the examples in Figure \ref{models}, simulations from these models may be difficult to tell apart by eye and the models are chosen to challenge the scoring functions.

\subsection{Evaluation}

We consider each of the five models both as true distribution $G$ and as predictive distribution $F$, for a total of 25 combinations. For each combination we estimate $\E_G[S_{\wh T}(\bb Y , F)]$ by simulating 100 independent copies of $\bb Y\sim G$ and averaging $S_{\wh T}(\bb Y , F).$ For the computation of $S_{\wh T}(\bb Y , F)$ the expectations are approximated by Monte-Carlo approximation based on 100 independent draws from $F$. As summary statistic estimator $\wh T$ we consider both the kernel estimator $\wh \lambda$ in \eqref{eq:lambda hat} and the K-function estimator $\wh K$ in \eqref{eq:K hat}. All computations in this paper are carried out using the \texttt{R}-package \texttt{spatstat} \citep{spatstat}. For the kernel estimator we use bandwidth $\sigma$=1.25 which is the default value (for the considered spatial window) of the \texttt{spatstat} functions \texttt{density.ppp} and \texttt{Kinhom.ppp} that are employed to calculate the estimators. 
\begin{figure}
\centering
\begin{subfigure}{0.475\textwidth}
\includegraphics[width = \textwidth]{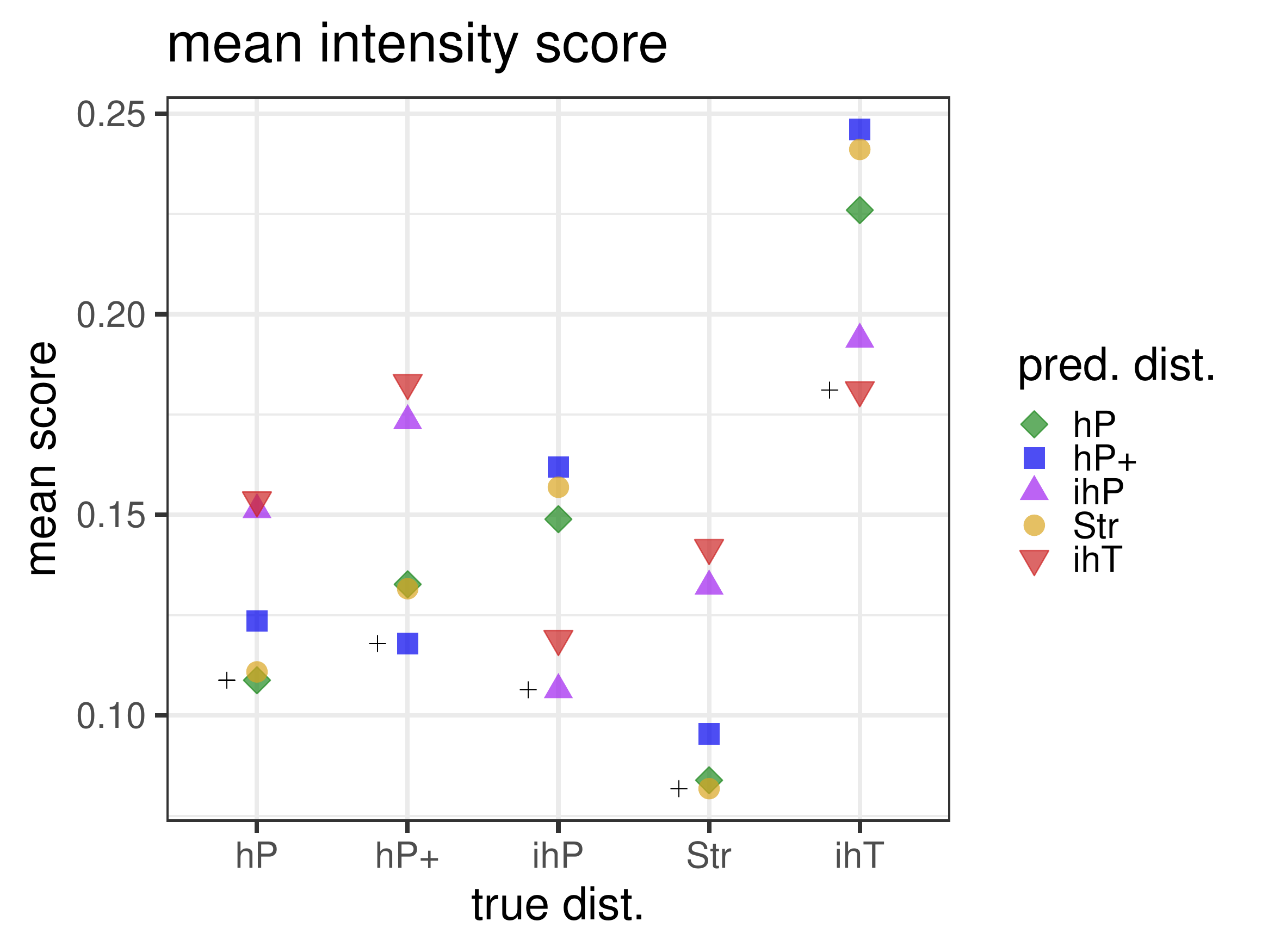}
\end{subfigure}\hspace{1em}
\begin{subfigure}{0.475\textwidth}
\includegraphics[width = \textwidth]{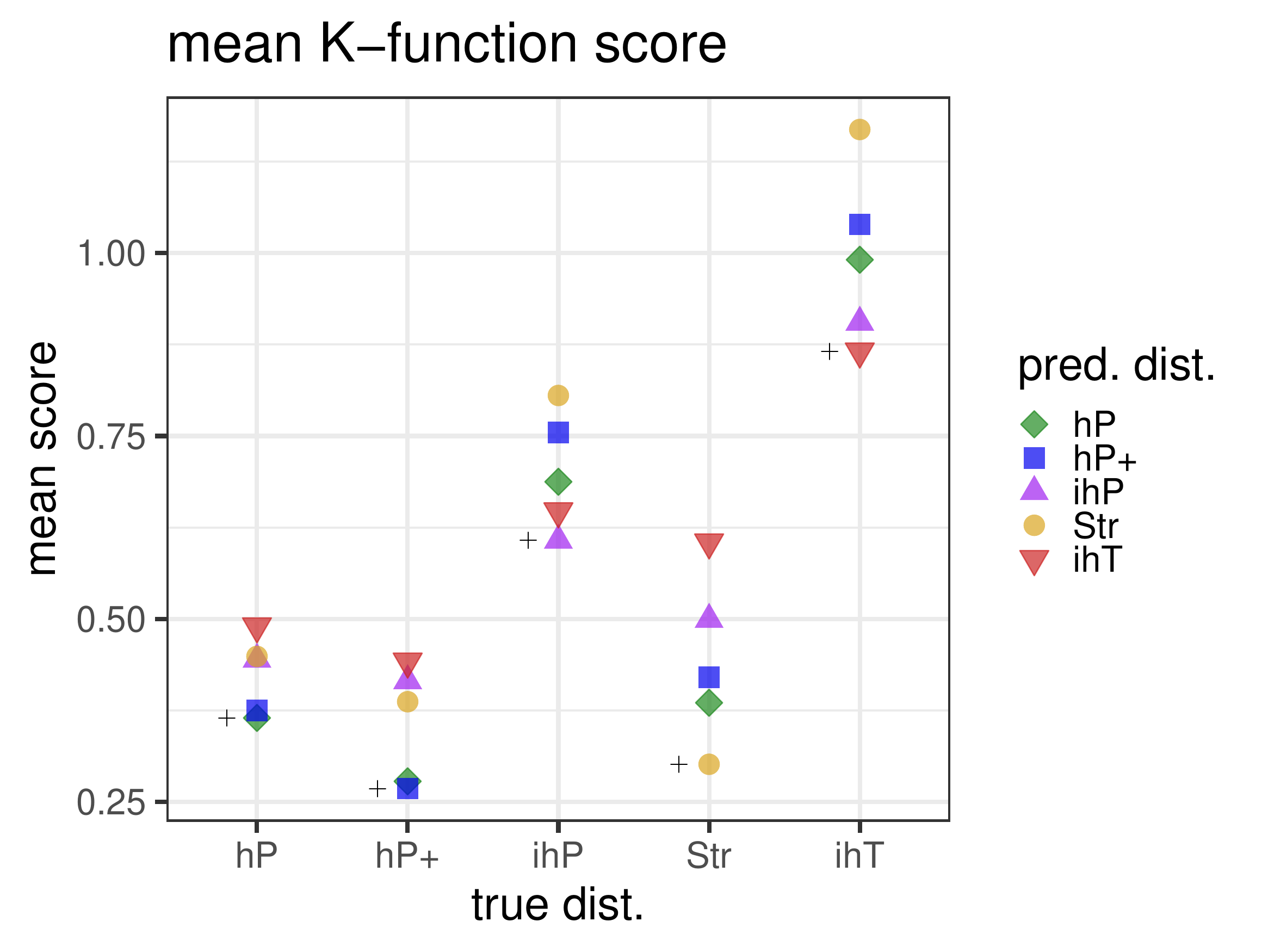}
\end{subfigure}
\caption{The mean scores $\E_G[S_{\wh \lambda}(\bb Y,F)]$ (left hand side) and $\E_G[S_{\wh K}(\bb Y,F)]$ (right hand side) for each combination of the five competing models. The $x$-axis shows the true distribution $G$ of the data, and the score for the correct distribution is additionally marked by a plus. \label{scores}}
\end{figure}

The results are presented in Figure \ref{scores}. For both scoring rules, the expected score is minimized for the true model under all distributions. The results highlight the sensitivity of the score to the underlying summary statistic. For example, under the intensity score, the mean scores for the models hP and Str are very similar, since these models have nearly identical intensities. On the other hand, the intensity score very clearly separates the homogeneous from the inhomogeneous models. Unlike for the intensity score, the $K$-function score identifies a clear difference between the models hP and Str. However, the differences in the $K$-function scores are particularly small when selecting between the two homogeneous Poisson models hP and hP+. 

\subsection{Significance of the results}
The significance of the mean score differences in Figure \ref{scores} can be assessed by permutation tests as e.g. described in \citet{Moeller&2013} and \citet{Good2013}. Table \ref{permtest} shows the $p$-values of permutation tests for all combinations of observed and predicted model. The results underline that the scores are targeting different properties: While the score differences for $S_{\wh \lambda}$ are always significant at a $5\%$ level, when observed and predicted model have different intensity, the score differences under $S_{\wh K}$ are significant whenever they have different point interaction. In practice, it is therefore advisable to apply scoring rules sensitive to properties that are most important for the study at hand, and, in case of doubt, compare competing models based on a variety of attributes. 

Note that the differences between the mean $K$-function scores of the different Poisson models are not significant at the 5\% level. As these models have identical theoretical $K$-function, this is to be expected. It is therefore rather remarkable that the mean score gets minimized in the correct model for all model combinations.
This follows from the estimated $K$-functions $\wh K$ having different distributions under the different Poisson models, and these differences being identified by the strictly proper CRPS. A similar effect is e.g. reported in \citet{Thorarinsdottir&2016} in the context of assessing the multivariate structure of high-dimensional real-valued predictions. 

\setlength{\tabcolsep}{2pt}

\begin{table}
\footnotesize
\centering
\begin{tabular}{llccccc}
\multicolumn{7}{c}{$S_{\wh \lambda }(Y,F)$}\\[0.5 em]\hline\hline
\multicolumn{2}{l}{$F:$}& hP & hP+ & ihP & Str &ihT\\ \hline
\multirow{5}{*}{$G$:\,} 
& \multicolumn{1}{l|}{hP}  	& - 		& $<$0.1 	& $<$0.1 			& {\bf \phantom{0}8.0} 	& $<$0.1\\
& \multicolumn{1}{l|}{hP+}  & $<$0.1 	& - 		& $<$0.1 			& $<$0.1 				& $<$0.1 \\
& \multicolumn{1}{l|}{ihP}  &  $<$0.1 	& $<$0.1 	& - 				& $<$0.1 				& {\bf \phantom{0}8.4}\\
& \multicolumn{1}{l|}{Str } & {\bf 15.1}& $<$0.1 	& $<$0.1 			& - 					& $<$0.1 \\
& \multicolumn{1}{l|}{ihT } &  $<$0.1 	& $<$0.1 	& \phantom{0}4.4 	& $<$0.1 				& - 
\end{tabular}\hspace{6em}
\begin{tabular}{llccccc}
\multicolumn{7}{c}{$S_{\wh K }(Y,F)$}\\[0.5 em]\hline\hline
\multicolumn{2}{l}{$F:$}& hP & hP+ & ihP & Str &ihT\\ \hline
\multirow{5}{*}{$G$:\,} 
& \multicolumn{1}{l|}{hP}  	& - 					& {\bf 38.3} 			& {\bf 17.6} 		& $<$0.1 	& $<$0.1 \\
& \multicolumn{1}{l|}{hP+}  & {\bf 38.9} 			& - 					&  {\bf 16.7} 		& $<$0.1 	& $<$0.1 \\
& \multicolumn{1}{l|}{ihP}  & {\bf \phantom{0}8.2}  & {\bf \phantom{0}6.2}	& - 				& $<$0.1 	& {\phantom{0}0.5}\\
& \multicolumn{1}{l|}{Str } & $<$0.1				& $<$0.1 				& $<$0.1 			& - 		& $<$0.1 \\
& \multicolumn{1}{l|}{ihT } &  $<$0.1				& $<$0.1  				& {\phantom{0}0.7} 	& $<$0.1 	& - 
\end{tabular}
\caption{$p$-values (in $\%$) of permutation tests assessing the significance of the difference between the score of predictive distribution $F$ and the score of the true distribution $G$.
Values above $5$\% (in bold) indicate nonsignificance, and the score cannot reliably distinguish between $F$ and $G$ in the setting of simulation study 1 and a sample size of 100 point patterns.\label{permtest} }
\end{table}

\section{Simulation study 2: The role of the bandwidth for the intensity score}\label{SS2}

In a second simulation study, we analyze the effect of the bandwidth parameter $\sigma$ in the intensity score, and compare its performance to the logarithmic score. Specifically, we assess how reliably the logarithmic score and the intensity score  identify the correct model when the mean score is calculated from samples (or observations) of varying sizes. This study uses an observation window of the same size as before but with approximately twice as many expected points. We consider a range of bandwidth parameters $\sigma$ to analyze how the bandwidth affects the ability of the score to identify the correct model.

\subsection{Competing models}

Here, we focus on models where the density is available in closed form for a straight-forward evaluation of the logarithmic score. As observation-generating model we consider a Poisson model with a standard Gaussian kernel (multiplied by a scalar) as intensity function. The competing predictive models are various Poisson processes based on Gaussian kernels, which allows us to look at the effect of a variety of different misspecifications in the prediction: Local shift (wrong mean $\mu$ of the kernel), misspecified spread (wrong standard deviation $\eta$), misspecified shape (wrong correlation $\rho$), as well as misspecification in the expected number of points, see Figure \ref{modelsPois}. We denote the models by $F_1,...,F_6$, with $F_1$ being the observation model.  Again, the model parameters are chosen such that it is challenging to tell the models apart by eye. 

\begin{figure}[!hbpt]
\hspace{-2em}
\begin{minipage}{0.23\textwidth}
\includegraphics[width = \textwidth]{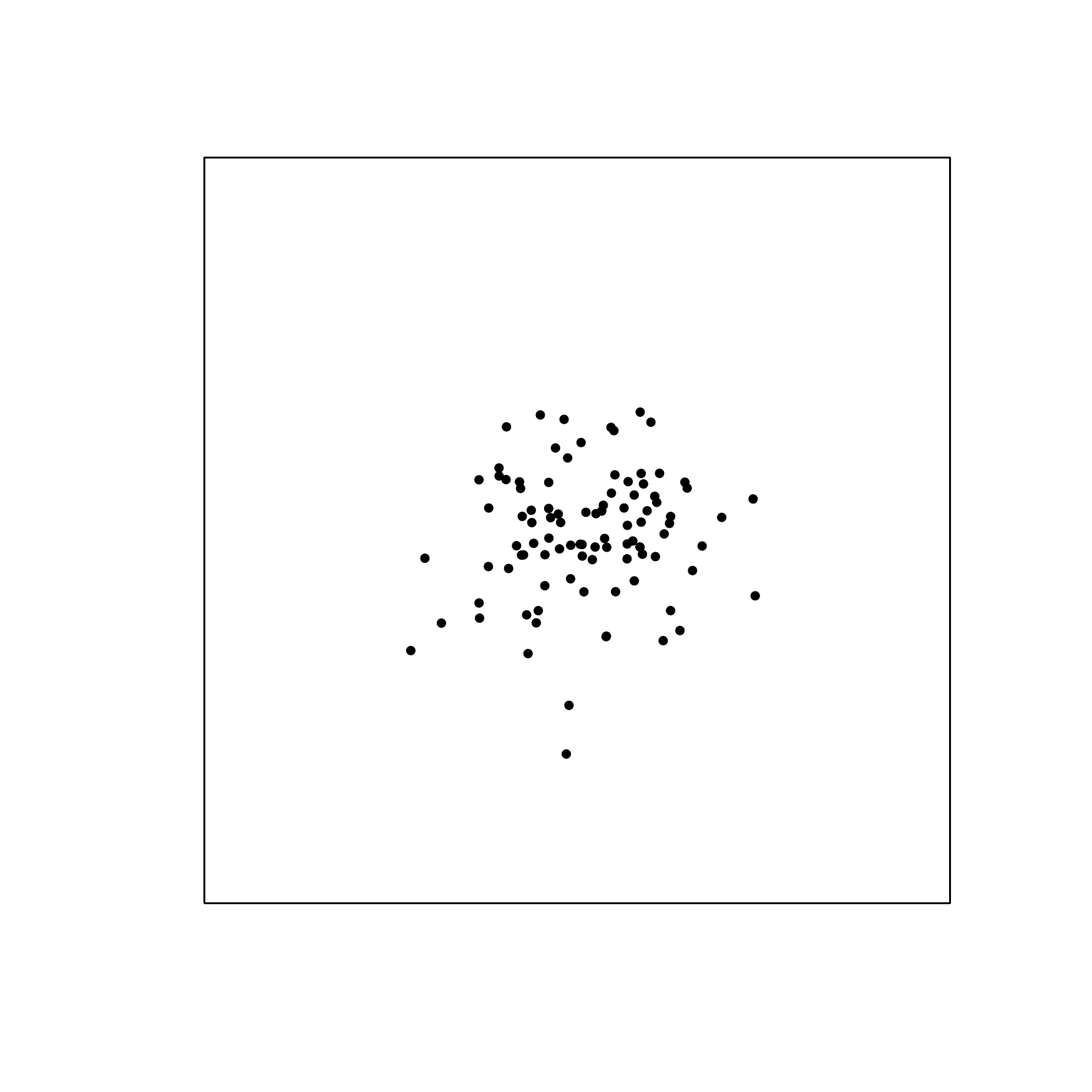}\\[-2em]
\centering
$\phantom{(}\text{obs. model}\phantom{)}$
\end{minipage}\hspace{-3.9em}
\begin{minipage}{0.23\textwidth}
\includegraphics[width = \textwidth]{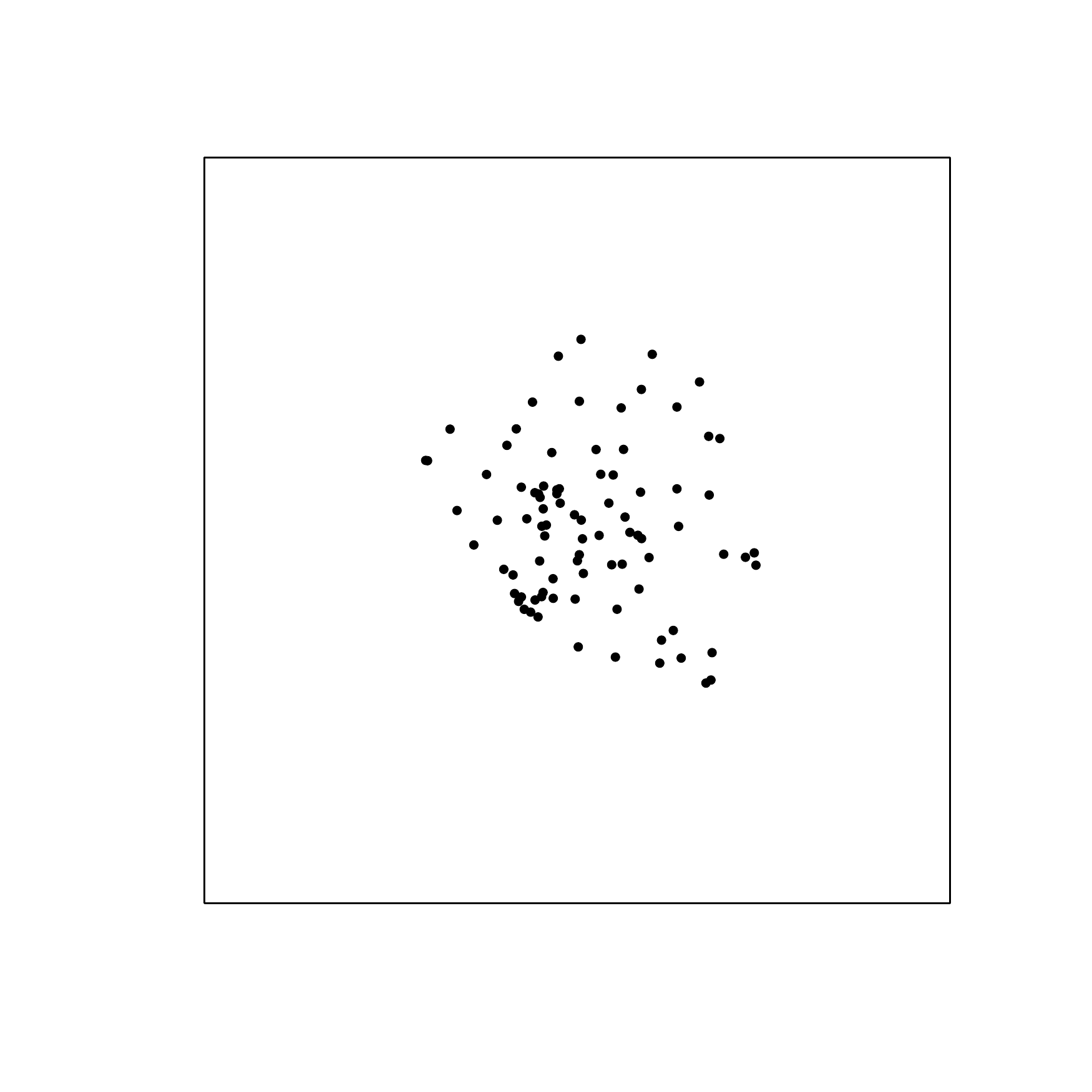}\\[-2em]
\centering
$\mu = (0.1,0)$
\end{minipage}\hspace{-3.9em}
\begin{minipage}{0.23\textwidth}
\includegraphics[width = \textwidth]{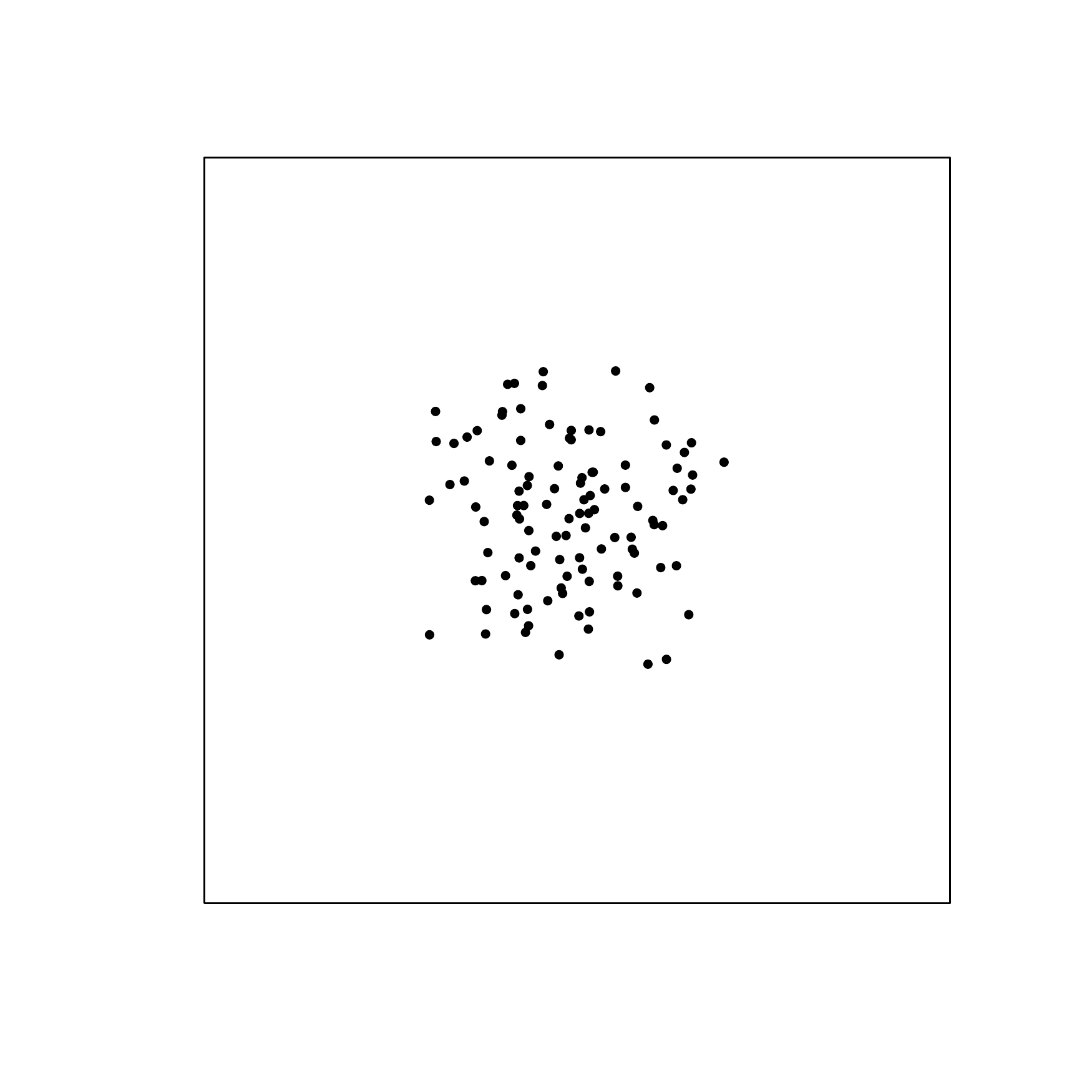}\\[-2em]
\centering
$\eta = (0.9,0.9)$
\end{minipage}\hspace{-3.9em}
\begin{minipage}{0.23\textwidth}
\includegraphics[width = \textwidth]{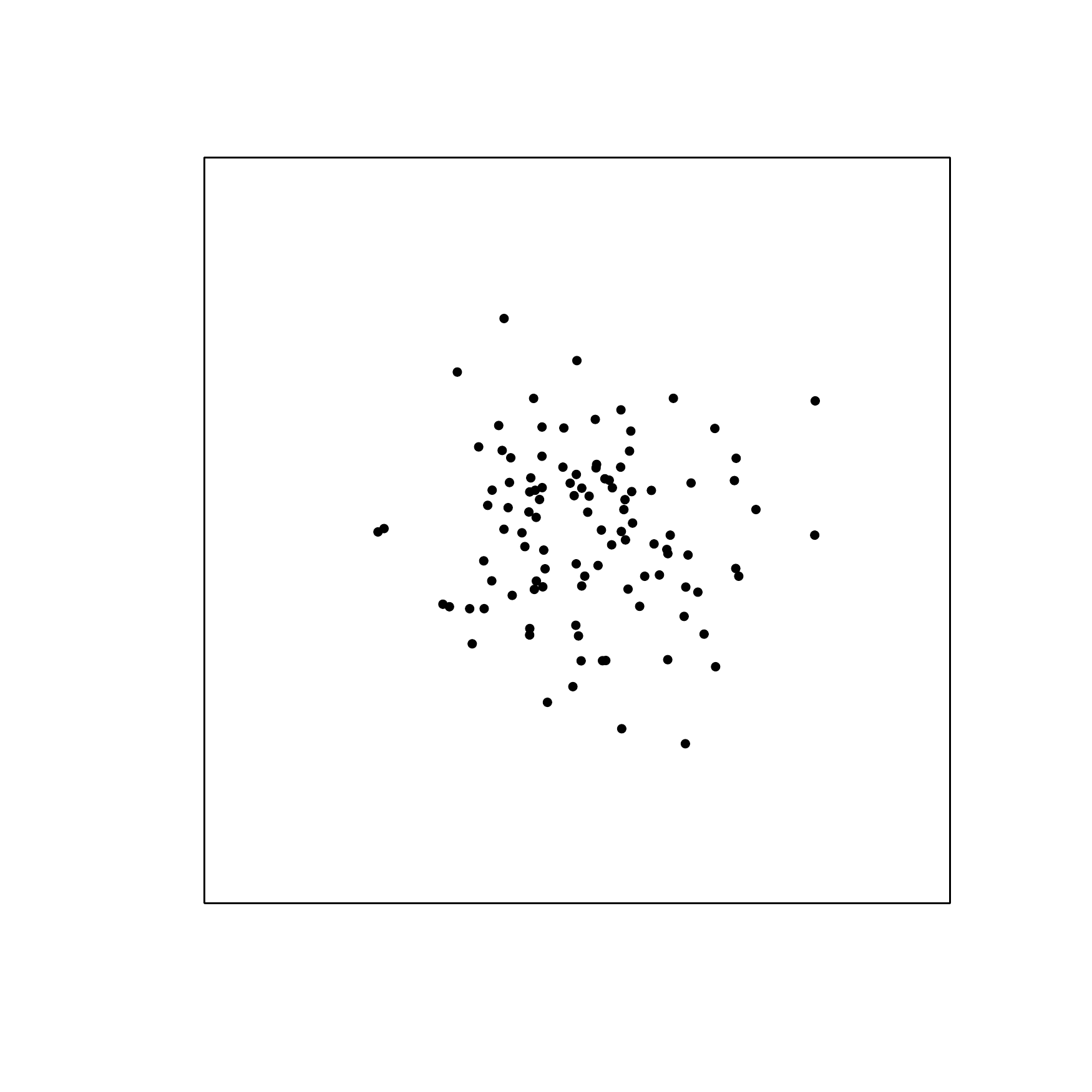}\\[-2em]
\centering
$\eta = (1.1,1.1)$
\end{minipage}\hspace{-3.9em}
\begin{minipage}{0.23\textwidth}
\includegraphics[width = \textwidth]{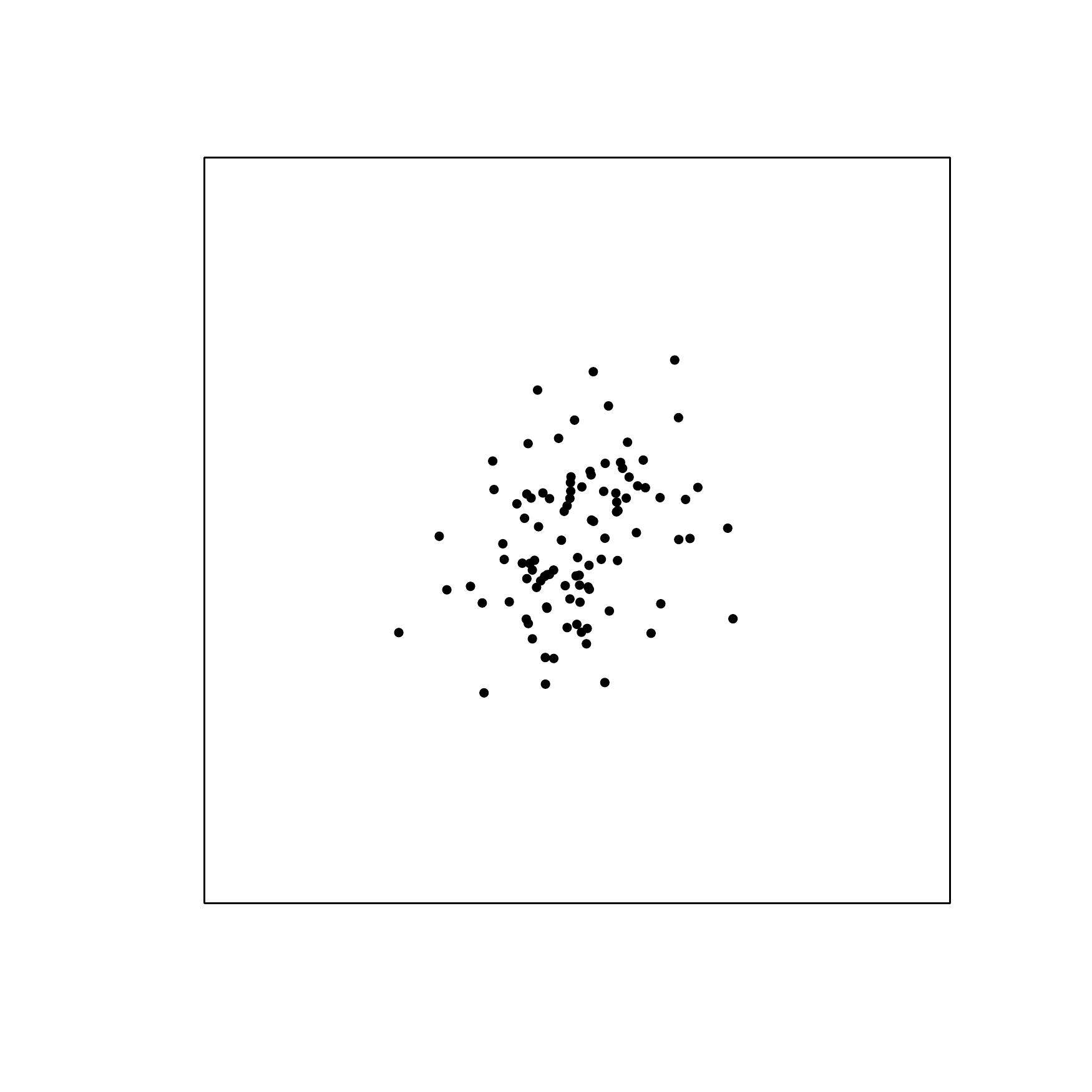}\\[-2em]
\centering
$\rho = 0.1$ 
\end{minipage}\hspace{-3.9em}
\begin{minipage}{0.23\textwidth}
\includegraphics[width = \textwidth]{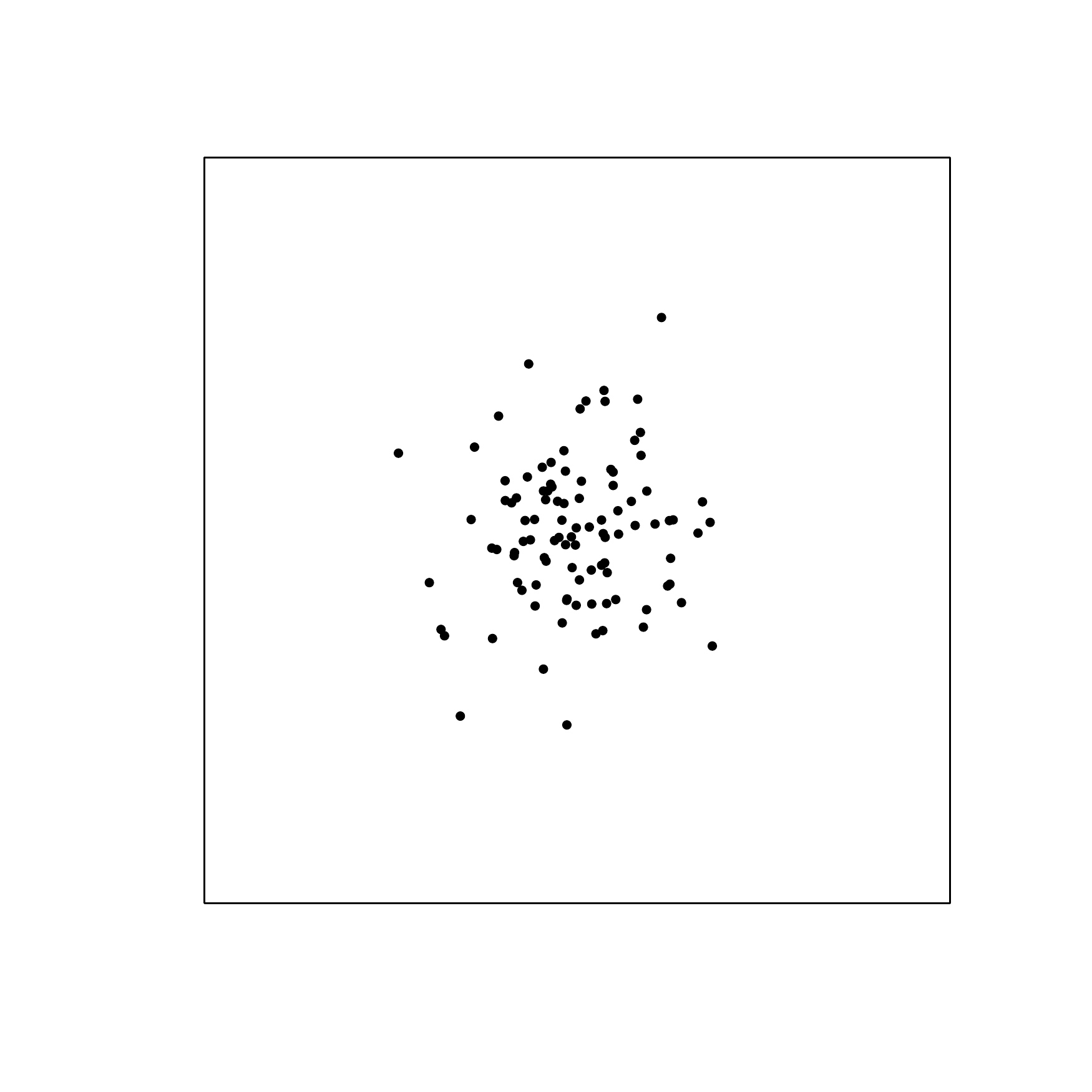}\\[-2em]
\centering
$\E[n(\bb X)] = 105$
\end{minipage}
\caption{ 
The different Poisson models considered in the second simulation study. The spatial window is $ W = [-5,5]\times[-5,5]$.  From left to right: The observations are distributed according to $F_1$,  a Poisson model with the intensity equal a standard Gaussian kernel, multiplied by 100. Model $F_2$ has an intensity with misspecified mean $\mu$. Models $F_3$ and $F_4$ have mean zero and a misspecified standard deviation $\eta = 0.9$ and $\eta = 1.1,$ respectively (both in $x$- and $y$-component). In model $F_5$, a mean-zero unit-variance Gaussian kernel with correlation of 0.1 between $x$- and $y$-component is used. The intensity of model $F_6$ is again standard Gaussian, but multiplied by 105, leading to a slightly larger number of expected points.
 \label{modelsPois}}
\end{figure}

\subsection{Evaluation}
We generate 300 random samples $\{\bb Y_j\}_{j = 1,...,300}$ from the observation model $F_1$. Then, for each of the predictive models $F_1,...,F_6$ we compute the 300 score evaluations $S(\bb Y_j,F_i)$, for the logarithmic score and the intensity score with varying bandwidth parameters $\sigma$. The latter requires Monte-Carlo approximation of the two expectations in \eqref{IntScore}, which are each based on 300 i.i.d. realizations from the predictive models. We then subsample from the collection of scores $\{S(\bb Y_j,F_i)\}_{j = 1,...,300;\, i = 1,...,6}$ in order to assess the performance of the mean score under differently sized observation sets.

\begin{figure}[t]
\centering
\begin{subfigure}[t]{0.47\textwidth}
\includegraphics[scale = 0.3]{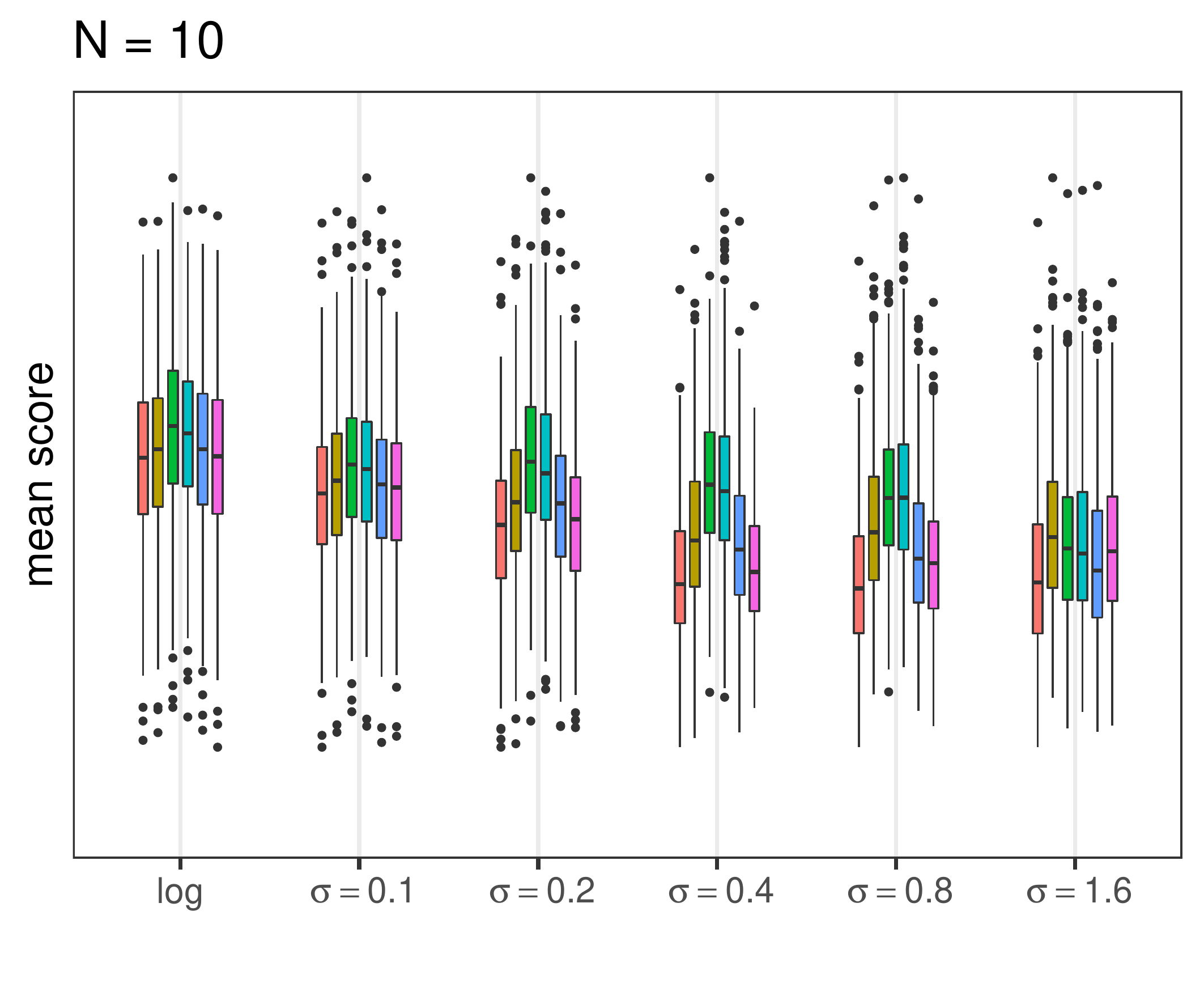}
\end{subfigure}\hspace{1em}
\begin{subfigure}[t]{0.47\textwidth}
\includegraphics[scale = 0.3]{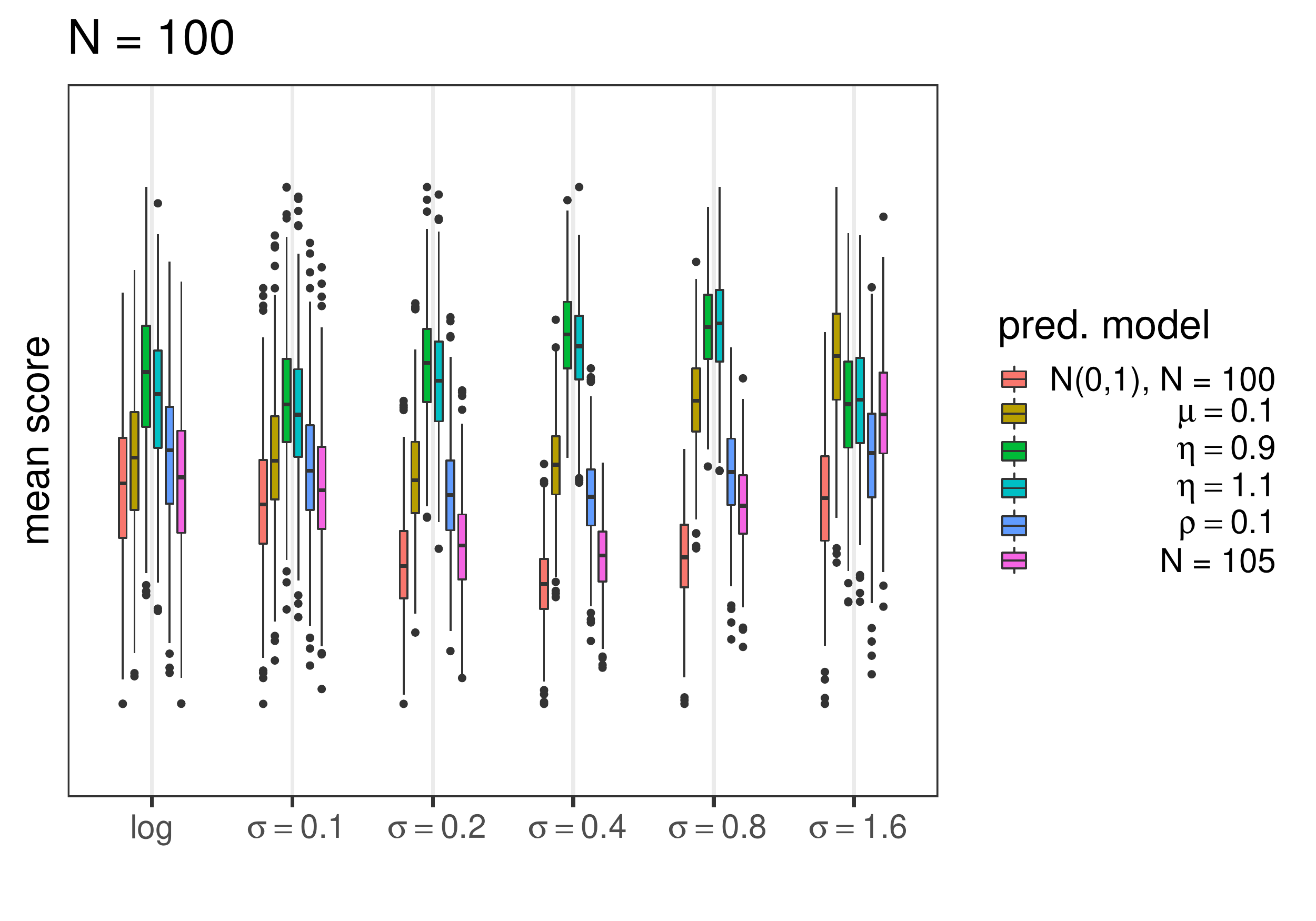}
\end{subfigure}
\caption{Mean scores under true distribution $F_1$. The logarithmic score is compared with the intensity score for a range of different bandwidths $\sigma$. The boxes indicate variability in the mean scores based on 10 (left) or 100 (right) observations. \label{scorePoisson}}
\end{figure}

Figure \ref{scorePoisson} shows sample distributions (as boxplots) of the mean scores when 10 (resp. 100) observations are available. For ease of comparison, we apply affine positive linear functions to all scores to fit them to the same scale which does not affect the ordering of competing models. When the number of available observations is $N=10$, the uncertainty in the estimated mean scores is much larger than the score differences between different predictive models. For $N=100$, the differences in mean score increase relative to the spread, indicating that the scores can clearer identify the correct predictive distribution. The figure indicates quite good performance of the intensity score with bandwidths $\sigma=0.2, \, 0.4,\,0.8$. 

\subsection{Significance of the results}

As competing models should be evaluated on the same set of observations \citep{GneitingRaftery2007}, there is potential correlation in the score values for competing models. Figure \ref{fig:correlation} shows how the correlation in score values for models $F_1$ and $F_5$ varies between the scores: the correlation is stronger for the logarithmic score than the intensity score, and, for the latter, the correlation changes with the bandwith. A stronger correlation can counteract small differences between marginal distributions of score values, which renders the results in Figure \ref{scorePoisson} inconclusive regarding which score is most likely to prefer the correct model. To account for the joint distribution of score values, Figure \ref{pvalsPoisson} shows (approximated) $p$-values for permutation tests comparing the correct model to each of the competing models. 

\begin{figure}[t]
\centering
\includegraphics[scale = 0.2]{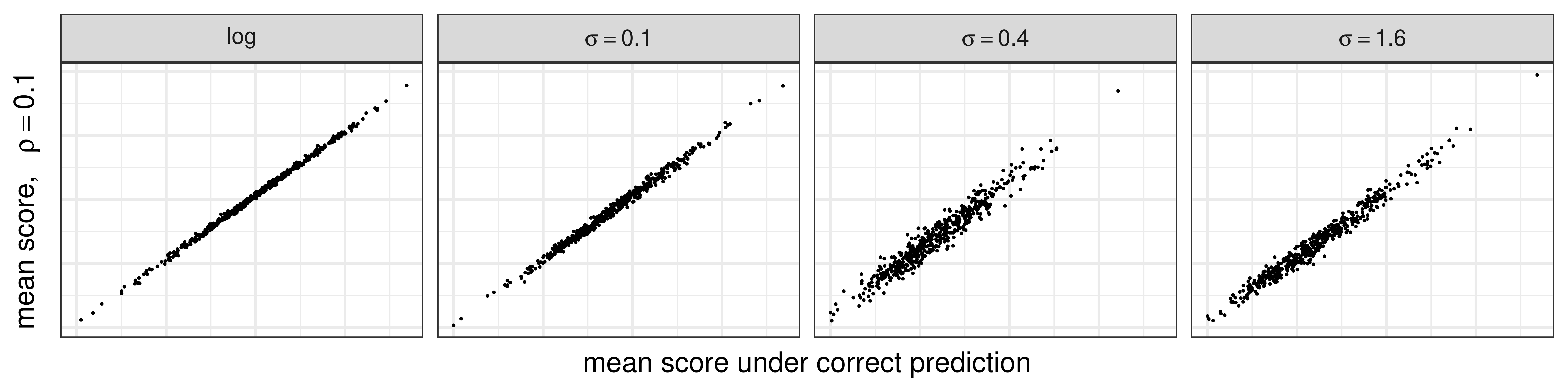}
\caption{Scatter plot of mean scores for the two models $F_1$ and $F_5$, evaluated on 10 random observations distributed according to $F_1$. For each dot the coordinates correspond to the two mean scores evaluated on the same (synthetic) observation.\label{fig:correlation}}
\end{figure}

\begin{figure}[t]
\centering
\includegraphics[width = \textwidth]{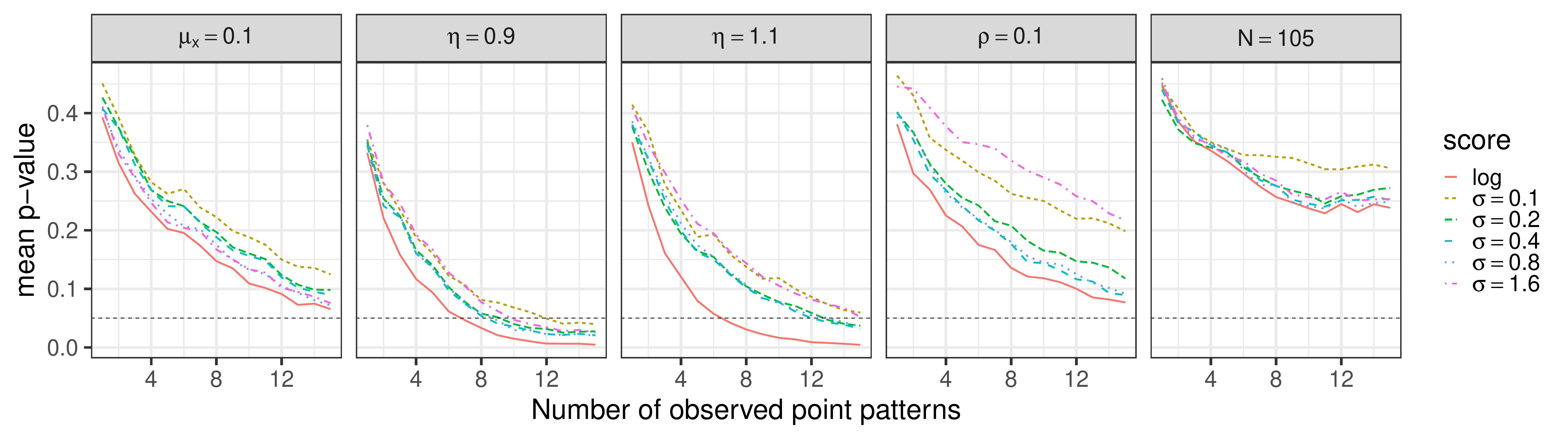}
\caption{ Estimated $p$-values of a permutation test comparing the 5 incorrect models to the correct model, as a function of the number of available observations. Each model's deviation from the correct model is indicated in the panel above the plot. The dashed horizontal line highlights the classical $5\%$ significance level. For each curve, the leftmost point corresponds to the $p$-value for a single observation, $N=1$.
\label{pvalsPoisson}}
\end{figure}

To imitate common applications, we focus on 1 to 15 available observations. Specifically, we sample 100 mean scores for each of the models $F_1, .., F_6$ for $N = 1,...,15$ observed point patterns using the same sets of observations for all competing models. Each result for a wrong model is then compared to the corresponding result for the true model by a permutation test with 100 permutation resamples, yielding a total of 45,000 permutation tests (6 scoring rules, 5 competing models, 1 to 15 observations, 100 repetitions for each configuration). The average $p$-value for each configuration is reported in Figure~\ref{pvalsPoisson}.

The logarithmic score almost always has the highest probability of correctly identifying the true model, in line with intuition gathered from the Neyman-Pearson lemma suggesting that log-likelihood based inference tends to be most powerful, see e.g. the discussion in \citet{Lerch&2017}. Note, however, that the Neyman-Pearson lemma considers log-ratio tests and does not apply to permutation tests. The intensity scores with bandwidths $\sigma = 0.2,0.4,0.8$ lead to very similar $p$-values, slightly higher but comparable to the log-score $p$-values. Overall, the scores are quite good at identifying the correct model for 10 or more observations, except for the case where the model misspecification is only given by approximately 5\% more points. The overall high probabilities of all considered scores to identify the correct model under such small sample sizes is particularly promising considering the likeness of the competing models in this study.

\subsection{Bandwidth selection for the intensity score}

Generally, the choice of bandwidth for kernel estimators introduces a bias-variance trade-off; small bandwidths reduce the bias of the estimator but increase its variance \citep{Abramson1982}. Our findings here highlight this effect, since the scores with bandwidths $\sigma = 0.1$ and $\sigma = 1.6$ tend to have the lowest discriminability. The optimal bias-variance trade-off depends on mathematical properties of the underlying point process model \citep{DiggleMarron1988}, and is therefore unknown in practical applications.

The bandwidth parameter constitutes a penalty on spatial displacement of points: For larger bandwidths, the intensity estimator uses a wider smoothing kernel and the score becomes increasingly forgiving towards spatial displacements of points. This is highlighted in Figure~\ref{pvalsPoisson} by the lack of sensitivity towards the misspecification $\rho = 0.1$ for $\sigma = 1.6$. For the misspecification $\mu_x = 0.1$, the whole point pattern and thus the entire intensity estimate is shifted. This is detected by the score independent of bandwidth with a wider bandwidth having a variance-stabilizing effect, increasing the discriminability.  A small value of $\sigma$, on the other hand, increases the variance of the intensity estimate and thus decreases the sensitivity of the score.

In conclusion, the performance of the intensity score appears not overly sensitive to the choice of bandwidth. However, smaller bandwidths increase the sensitivity of the score to spatial displacements in the predictive model at the cost of higher sampling uncertainty in the score computation. We recommend choosing a small bandwidth, subject to it being substantially larger than the typical distance between nearest points. While the intensity score is almost as sensitive to misspecifications in the prediction as the logarithmic score, the logarithmic score may be preferred in situations where it can be applied to all competing models to avoid unnecessary computational costs and score uncertainty. 

\section{Earthquake rate prediction}\label{Earthquakes}

In this section we apply the intensity estimator score to evaluate earthquake rate predictions for California. Specifically, we focus on long-term predictions that quantify the expected rate of earthquakes over longer periods, e.g. five years. Such forecasts are, for example, used in urban planning \citep{Jena&2020}; for an overview, see \citet{Schorlemmer&2018} and references therein.

\subsection{Data}

Our data set is a catalog of all earthquakes of magnitude 3 or higher on the Richter scale in northern California--a window defined by $117.5^\circ -126^\circ$W and $34.5^\circ -42^\circ$N--that occurred between January 1, 1968 and December 31, 2019. The catalog contains a total of 17\,799 earthquakes and is publicly available at \url{http://www.ncedc.org}. The observation window and a subset of the data are shown in Figure~\ref{fig:EQexample}. 
 
\begin{figure}[!hbpt]
\centering
\includegraphics[width = 0.5\textwidth]{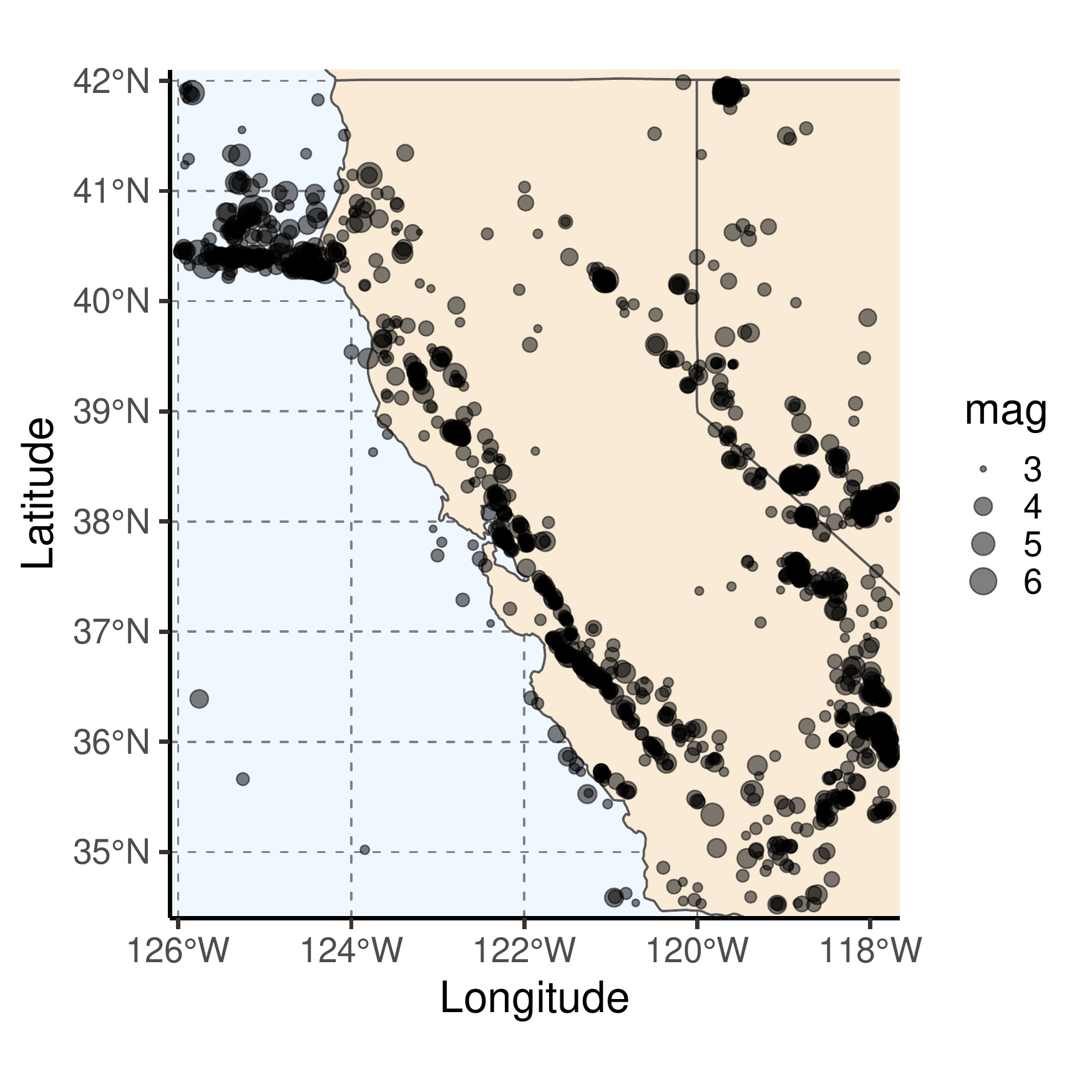}
\caption{Locations of earthquakes in northern California between 2011 and 2019. The point size indicates the magnitude of the earthquake on the Richter scale.\label{fig:EQexample}}
\end{figure}

\subsection{Competing models}

We follow \cite{VeenSchoenberg2006} and model the spatial distribution of earthquakes using a semi-parametric Poisson model with an intensity given by a mixture of an inhomogeneous intensity $\mu_\eta(x,y)$ and a homogeneous intensity $\nu$,
\[
\lambda_{\alpha,\eta} (x,y) = \alpha \mu_\eta(x,y) + (1-\alpha)\nu,
\]
where $\alpha\in [0,1]$ and $\eta >0 $ are model parameters. The inhomogeneous part $\mu_\eta(x,y)$ accounts for the fact that earthquake locations are generally clustered around seismic faults; it is estimated by kernel smoothing of all earthquakes in the training period with magnitude 3.5 or larger using a Gaussian kernel with standard deviation $\eta$ in both $x$- and $y$-coordinate and correlation estimated from the data. The homogeneous part $\nu$ accounts for earthquakes potentially occurring far away from previously recorded earthquakes; it is estimated as $N/A$ where $N$ is the total number of observed earthquakes with magnitude 3 or larger, and $A$ is the area of the observation window. The model assumes stationarity in time and the intensity is scaled by the length of the considered time period. 

The aim of the analysis is to find good values for the model parameters $\alpha$ and $\eta$ in a prediction context. In a cross-validation study, we compare spatial predictions for a range of values for $\alpha$ and $\eta$ under the intensity score and the logarithmic score.  \cite{VeenSchoenberg2006} previously performed an in-sample assessment of these models for earthquakes in southern California using a visual comparison of scaled $K$-functions for fixed $\eta=8$km and a range of values for $\alpha$. 

\subsection{Evaluation}

Assuming stationarity in time, we treat our data set as cyclic, i.e. we assume that the year 2019 is followed by the year 1968. A training set consists of 21 consecutive years, to match the set-up of \citet{VeenSchoenberg2006}. The remaining years (with a one year break before and after the training set to reduce correlation due to aftershocks) are used for evaluation, resulting in a validation set with 29 years of data. This procedure is repeated 52 times, so that each year of data is the first year in the training set exactly once.  

We compute mean scores for the parameter choices $\alpha\in \{0.6,0.65,...,1\}$ and $\eta\in\{2^k\,:\,k = 1,...,5\}$, where the unit of $\eta$ is kilometer. We employ the intensity score with bandwidths $\sigma = 4$ km and $\sigma = 8$ km in addition to the logarithmic score. The data contains on average 122 earthquakes within a 4 km radius of an earthquake location, and 277 earthquakes within an 8 km radius of an earthquake location, rendering these values substantially larger than the typcial distance between nearest points. 

\begin{figure}[h]
\includegraphics[width = 0.95\textwidth]{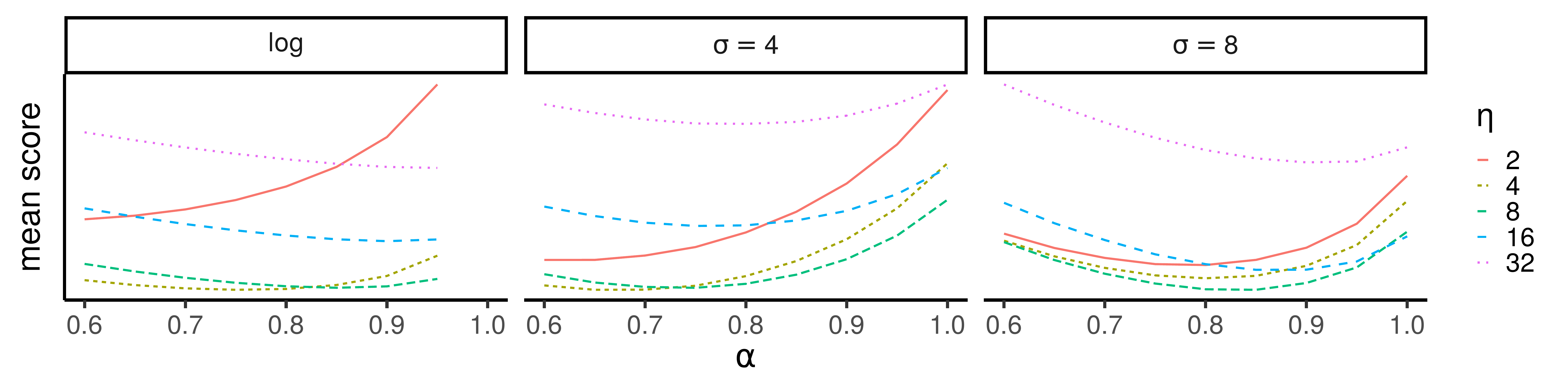}
\caption{Mean scores for earthquake rate prediction models based on a 52-fold cross-validation study, for the logarithmic score (left) and the intensity score with $\sigma=4$ km (middle) and $\sigma=8$ km (right). For ease of comparison, the scores have been rescaled to fit the same range.\label{fig:EQ_mean_scores}}
\end{figure}

The resulting scores are presented in Figure \ref{fig:EQ_mean_scores}. Overall, the model rankings resulting from the three scores show many similarities. In particular, all the scores agree that the best choices for $\eta$ are $4$ km or $8$ km, and the best choice of $\alpha$ is between $0.65$ and $0.85$. Specifially, the optimum under the logarithmic score is $(\alpha,\eta) = (0.75, 4)$, it is $(\alpha,\eta) = (0.65, 4)$ for the intensity score with $\sigma = 4$ km and $(\alpha,\eta) = (0.85, 8)$ for the intensity score with $\sigma=8$ km. The results indicate that the optimal value of $\alpha$ is increasing in $\eta$, indictating that the weight of the homogeneous component $\nu$ increases with a narrower kernel in the inhomogeneous component $\mu_\eta(x,y)$. Similarly, it can be observed that the intensity score with bandwidth $\sigma = 8$ km prefers both larger $\eta$ and larger $\alpha$ than the other two scores, in line with the intuition discussed in the previous section that increasing bandwidth corresponds to a lower penalty for spatial displacements of points.

\subsection{Discussion}

In their analysis, \cite{VeenSchoenberg2006} set $\eta = 8$ km and concluded, by visual inspection of inhomogeneous $L$-functions, that $\alpha\approx 0.7$ provided the best fit to their data.  This is in good agreement with our out-of-sample findings, even if the current data set covers a longer time period and we consider an  overlapping but not identical region.

We assume that the earthquake point process is stationary in time. However, since the data was collected over a timespan of more than 50 years, it is to be expected that the quality of the measurements varies over time. Indeed, the number of observations per year is more variable earlier in the time period, and exhibits a slight downward trend over time. While our study design partly accounts for this by merging early and late years, such features should be accounted for in operational prediction settings.

\section{Predicting spatial distribution of trees}\label{Trees}

In this application, we apply the $K$-function score to evaluate models that describe the spatial distribution of trees. The use of spatial point process models has substantially increased in ecology and forestry over the last decades, as they allow to address many key questions such as local dominance of species and species-area relationships, see \citet{Velazquez&2016} and \citet{Wiegand&2017} for overviews.

\subsection{Data}

We study location data of {\em Abies Amabilis} (Pacific silver fir)  at eight disjoint $25 \times 25$ m plots at Findley Lake Reserve in Washington State, USA. A description of the site conditions is given in \citet{Grier&1981}.  Figure~\ref{fig:silver fir four} shows the location of trees at two of the plots for three different time points over 31 years.  The area was clear-cut in 1957; the trees in our dataset were present as seedlings before the clear-cut and there appears to have been no reproduction in the stand since then.  The first observation was made in 1978, $21$ years after the area was clear-cut.  On average, roughly $80\%$ of the original trees were still present in the second observation in 1990 and approximately $25\%$ of them were present in the third observation in 2009. 

\begin{figure}[h]
\centering
\includegraphics[width=0.6\textwidth]{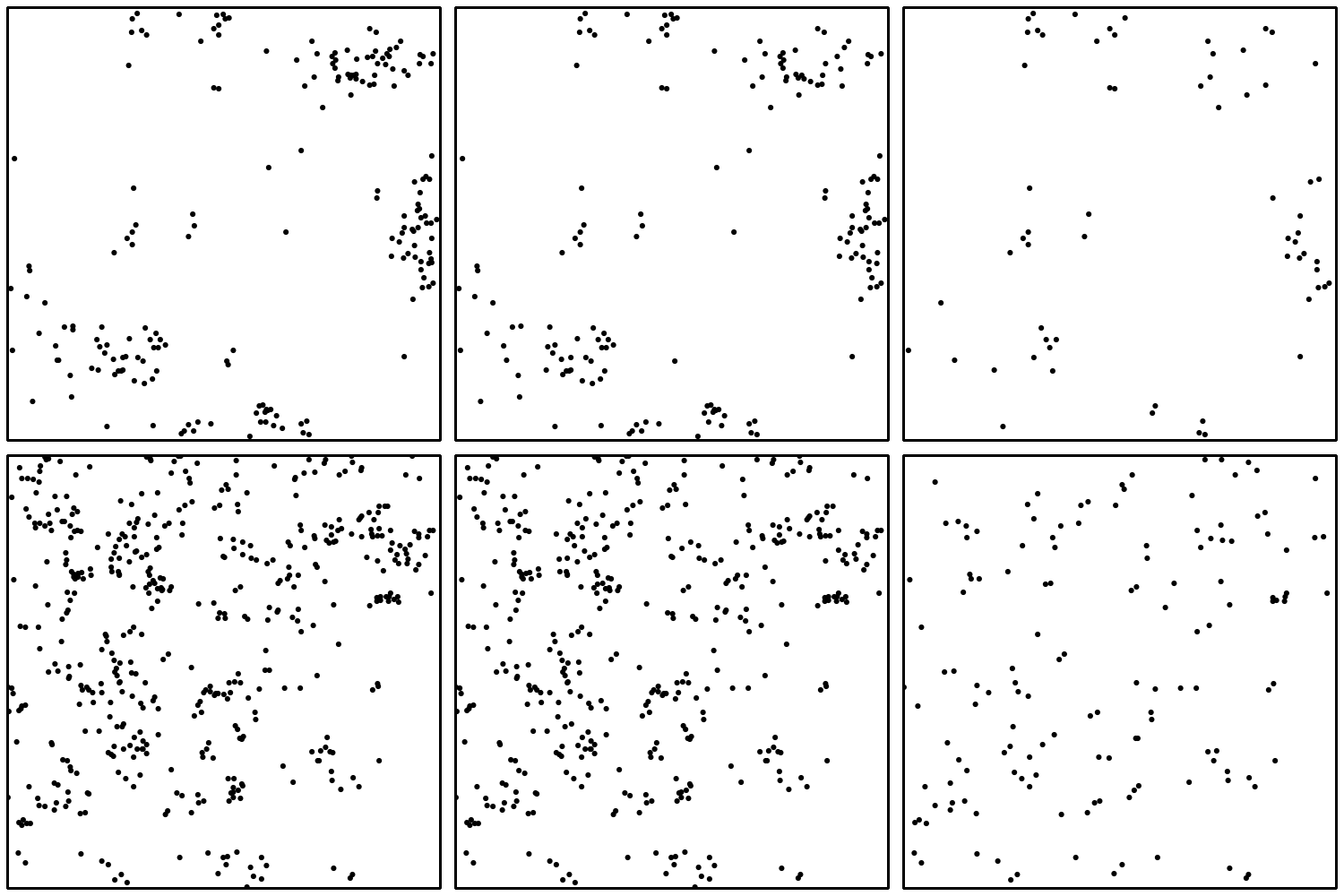}
\caption{{\em Abies amabilis} (Pacific silver fir) at two disjoint $25 \times 25$ m plots at Findley Lake Reserve in Washington State, USA.  The area was clear-cut in 1957; the first column shows trees present in 1978, 21 years after the clear-cut; the second column shows the trees still present in 1990, 33 years after the clear-cut; the third column shows the remaining trees in 2009, 52 years after the clear-cut.}
\label{fig:silver fir four}
\end{figure}

\subsection{Competing models}

Since the data do not show any signs of anisotrophy, we only consider spatially homogeneous models. Our emphasis lies in analyzing which model best captures the spatial clustering present in the data, particularly for the years 1978 and 1990 (see Figure~\ref{fig:silver fir four}). For this, we compare five competing models, all of which have been introduced in the context of modeling spatial distribution of plants, see the references given in Table \ref{CoxP}.  All five models belong to the class of doubly stochastic Poisson processes, or Cox processes, obtained by considering the intensity function itself a stochastic process. Four models are of the shot-noise type, where the random intensity is obtained by a sum of positive kernel functions $k(\cdot - x_i)$ centered around points $x_1,x_2,\ldots$ belonging to a homogeneous Poisson process with intensity $\kappa >0$. The models we consider are the Mat{\'e}rn, the Thomas, the variance Gamma and the Cauchy model. In addition, we consider the log Gaussian Cox process (LGCP). Table \ref{CoxP} gives a short characterization of each model, as well as references for more details.  There is a long-standing tradition in the literature of modelling spatial distribution of trees with the Mat\'ern and Thomas processes, cf. \cite{StoyanPenttinen2000}. More recently, the variance Gamma, the Cauchy and the LGCP models have become popular in this context, see e.g. \citet{Jalilian&2020}, \citet{Waagepetersen&2016} and \citet{Zhang&2019}.

\begin{table}[!hbpt]
  \caption{Specification of the five models used in the Pacific silver fir application. The first four models are shot-noise Cox processes, described in terms of a positive kernel $k$. They have an additional parameter $\kappa$, the intensity of the underlying homogeneous Poisson process. For the log Gaussian Cox process (LGCP), the random intensity is the exponential of a stationary Gaussian random field $Z$. Here, $b(0,\sigma)$ refers to the ball of radius $\sigma$ around 0, $\phi_\sigma$ is the isotropic mean-zero bivariate Gaussian kernel with standard deviation $\sigma I$, and $K_\nu$ denotes the modified Bessel function of the second kind.\label{CoxP} }
\footnotesize

\def\arraystretch{1.5}

\setlength{\tabcolsep}{12pt}
\begin{tabular}{llll}
  \toprule
   Model & Kernel function/intensity & Parameters & Reference\\
 \midrule
 Mat\'ern	&$k(w) = \xi  \mathbb{1}\{w \in b(0,\sigma)\}$	& $\kappa,\xi,\sigma$ &\citet{Matern1960}\\
 Thomas &$k(w) = \xi \phi_\sigma(w)$ & $\kappa,\xi,\sigma$	&\citet{Thomas1949}\\
 varGam & $k(w) = \xi\frac{1}{\pi2^{\nu + 1}\sigma^2\Gamma(\nu +1)}(\|w\|/\sigma)^\nu K_\nu(\|u\|/\sigma)$		& $\kappa,\xi,\sigma,\nu$ &\citet{Jalilian2013}\\
 Cauchy & $k(w) = \frac{1}{2\pi\sigma^2}\bigg(1 + \frac{\|w\|^2}{\sigma^2}\bigg)^{-3/2}$ & $\kappa,\xi,\sigma$ & \citet{Ghorbani2013}\\
 LGCP &$\lambda(w) = \exp(Z(w)),$ $\E[Z(w)] = \mu,$ & $\mu,\tau,\sigma$	&\citet{Moeller1998}\\
 & $\text{Cov}(Z(w_1),Z(w_2)) = \tau^2 \exp(-\|w_1-w_2\|/\sigma)$ && \\
\bottomrule
\end{tabular}
\end{table}

All five models are fitted in the same manner by minimum contrast, see \citet{DiggleGratton1984} and \citet{Waagepetersen2007}. First, the empirical $K$-function is estimated from the training set by averaging the estimator in \eqref{eq:K hat} across all plots in the training set. Second, the parameters $\Theta$ of the model are fitted to minimize the integrated distance
\begin{align}\label{mincon}
\wh \Theta = \argmin_\Theta\bigg\{\int_0^{r_{\max}} (\wh K(r)^{1/4} - K(r;\Theta)^{1/4})^2\,dr\bigg\},
\end{align}
where $K(r;\Theta)$ is the theoretical $K$-function for the model with parameter $\Theta.$ The upper limit $r_{\max}$ should be chosen small relative to the plot dimensions \citep{Diggle2013}, but the choice of $r_{\max}$ introduces a certain level of arbitrariness in the estimation procedure. We choose $r_{\max} = 6.25,$ one fourth of the plot length, the default value of \texttt{spatstat}.

\subsection{Evaluation}

The observations for each year are analyzed separately such that in each analysis, eight independent realizations of the underlying process are available. We consider a prediction scenario where four of the observations are used to fit a predictive distribution, which is then verified against the other four observations. This is repeated for all possible combinations, leading to a total of 280 score evaluations per year. Some of the scores are correlated, namely if they are based on the same observation (but use potentially different training sets), or if their training sets overlap. 

\begin{figure}[!hbpt]
\centering
\begin{subfigure}[t]{0.32\textwidth}
\includegraphics[width = \textwidth]{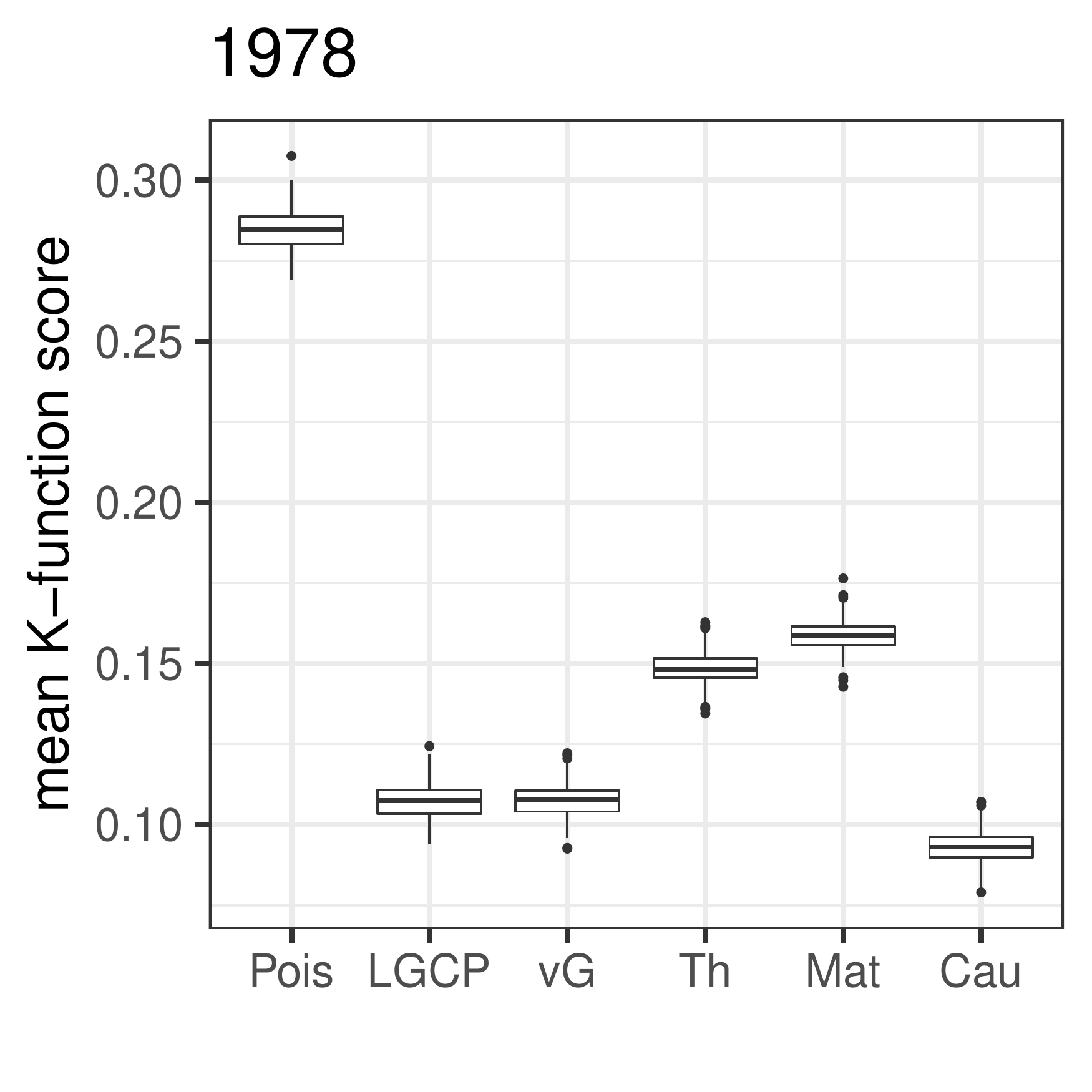}
\end{subfigure}
\begin{subfigure}[t]{0.32\textwidth}
\includegraphics[width = \textwidth]{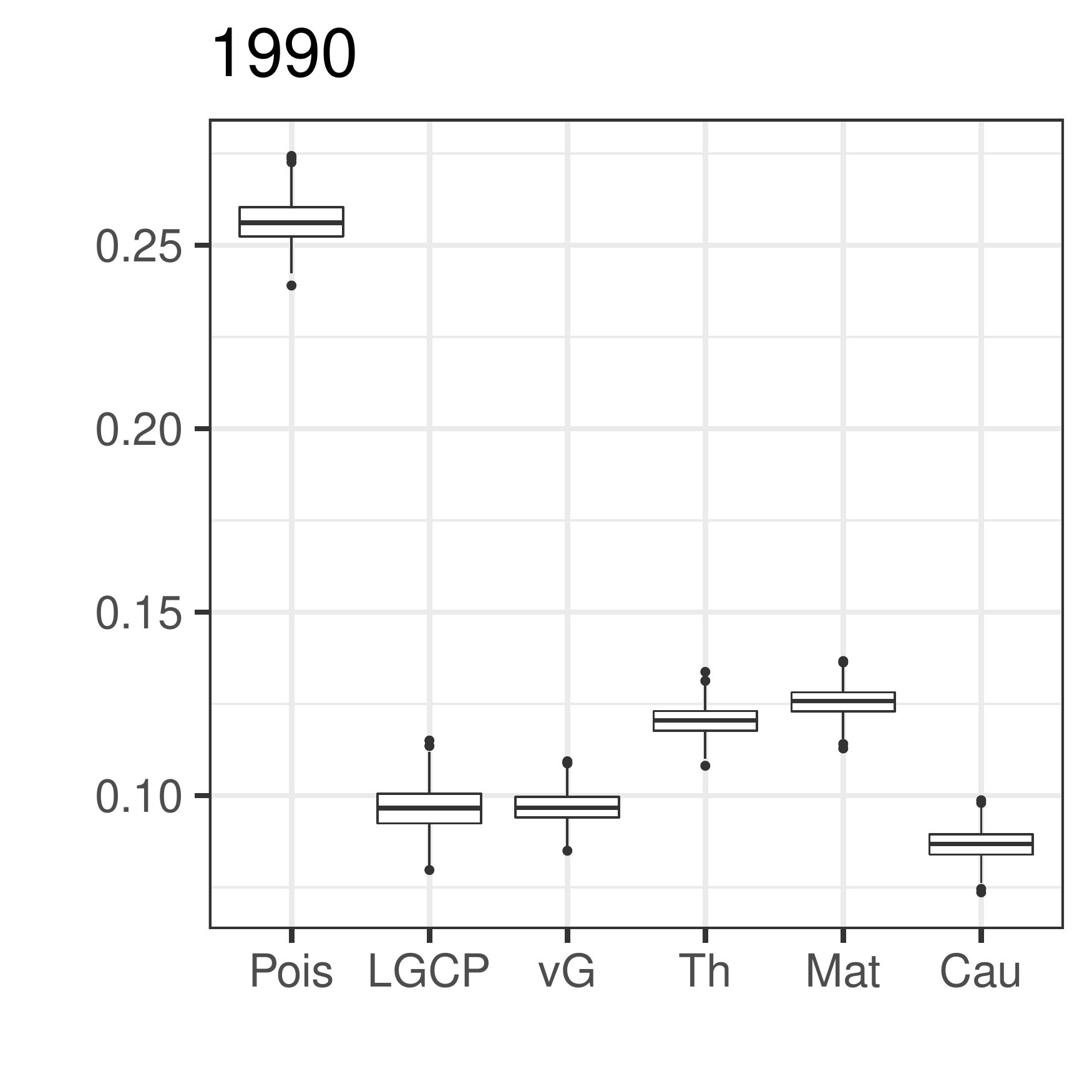}
\end{subfigure}
\begin{subfigure}[t]{0.32\textwidth}
\includegraphics[width = \textwidth]{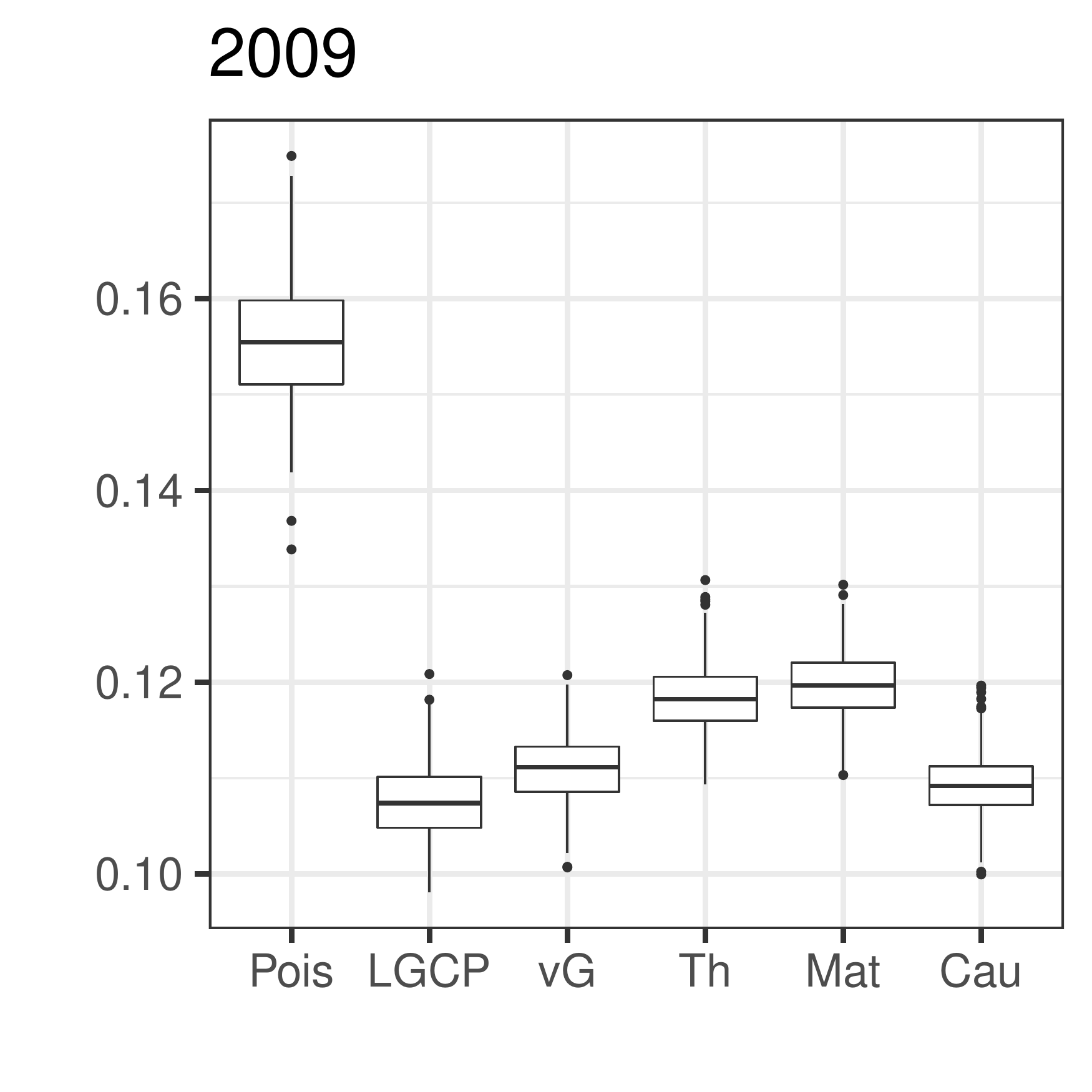}
\end{subfigure}
\caption{
Mean $K$-function scores for predicting the spatial distribution of Pacific silver fir for the three years available in the dataset.
The models are specified in Table \ref{CoxP} and a homogeneous Poisson model is shown for reference. The boxplots indicate bootstrapped distributions of the mean $K$-function score.\label{Fig:Treescores}}
\end{figure}

With a focus on the appropriate modelling of point interaction, we employ the $K$-function score for evaluating predictive performance. The results are shown in Figure \ref{Fig:Treescores}. For all three time points, the LGCP, variance Gamma and Cauchy model clearly outperform the Mat\'ern and Thomas model. The overall best score is achieved by the Cauchy model for the years 1978 and 1990, and by the LGCP for the year 2009. For the data from 1978 and 1990, permutation tests confirm the superior performance of the Cauchy model against all competing models ($p$-values below 0.01 for all models). For these years, the LGCP mean score is slightly lower than that for the variance Gamma process. These differences, however, are not significant ($p$-value of $0.4$ for 1978 and of $0.46$ for 1990). For the 2009 data, the LGCP model does not significantly outperform the Cauchy model ($p$-value of 0.15), but it is significantly better than the variance Gamma model ($p$-value of 0.04). The lower significance and higher variability of scores for 2009 is caused by the smaller number of observed trees per plot. A reduced number of points increases the variance in the $K$-function estimator, increasing the score uncertainty.

\subsection{Discussion}

Previous studies have compared the fit of several of these models to related data sets. \citet{Ghorbani2013} analyzes long-leaf pine location data and shows that the Cauchy model provides a better fit than the Thomas model. \citet{Zhang&2019} fit both the Mat\'ern and the variance Gamma model to location data of black locust trees and conclude that the variance Gamma model provides a better fit. In \citet{Jalilian2013}, the authors fit the Thomas, the variance Gamma and the Cauchy model to data sets of three species of rainforest trees. The variance Gamma model provides the best fit for two of the three considered species, and the Cauchy model for the third. Both the Cauchy and the variance Gamma model consistently outperform the Thomas model for all three species. In these studies, the model assessment is performed in-sample. Our out-of-sample  study mostly leads to similar conclusions, except that the variance Gamma model gets outperformed by the Cauchy and LGCP models.
A possible reason for this is that the variance Gamma model has an additional parameter compared to the other models.
This additional flexibility gives an advantage in in-sample studies where the model fit is assessed. In a prediction setting, on the other hand, selecting models with a high number of parameters can increase the risk of overfitting, resulting in reduced predictive performance.

\section{Conclusions and discussion}\label{Discussion}

This paper introduces a new class of proper scoring rules that combines estimators for summary statistics with the continuous ranked probability score (CRPS). The proposed scoring rules can be computed from random draws of the predictive models. Therefore they can be applied to a wider range of predictive distributions than the commonly used logarithmic score, which requires the density of the predictive model to be known. Moreover, the $K$-function score is, to the best of our knowledge, the first proper score for point processes focusing on correct representation of point interactions.

Generally, our scores do not require knowledge of mathematical properties of the predictive models, just the ability to generate random draws. This is a particular advantage in the context of point processes where mathematical properties are frequently untractable. The \texttt{R} package \texttt{spatstat} \cite{spatstat} provides a diverse set of tools for simulating a large set of spatial point process models and provides implementations of all common summary statistic estimators. \texttt{spatstat} therefore provides a platform that makes the introduced scoring rules easily applicable in a variety of situations. The code for all studies presented in this paper is publicly available at \url{github.com/ClaudioHeinrich/PointProcessSRs}.

Our approach is based on the intuitive principle that, when the observation space is complex, the observations and predictions can be mapped into a simpler space for validation. This approach is not restricted to point processes, and opens a fruitful new perspective on the validation of forecasts that live on complex spaces. Indeed, when the observation space is involved, finding proper scoring rules can be difficult, even more so when they ought to be sensitive to certain high level properties of the observation-generating process. The mapping approach replaces this challenge by the much easier task of finding real-(or function-)valued mappings sensitive to these properties. Potential other applications include high-dimensional forecasts and forecasts of spatial fields, as well as function-valued forecasts.

In the context of point processes, it is natural to use estimators of summary statistics in the mapping principle, as they are designed to be sensitive to high-level properties of the point process such as clustering or inhomogeneity. The variogram score from \citet{ScheuererHamill2015} follows a similar principle in a different setting, utilizing the variogram estimator of random vectors. An additional advantage of this approach is that there is a wide range of literature analyzing summary statistics and their estimators, and practicioners are familiar using them, which makes scoring rules based on summary statistics easier to interpret.

Besides the subjective selection of  a mapping, the mapping approach requires the selection of a proper scoring rule on the codomain. We opted for the CRPS since it is \emph{strictly} proper and can be efficiently approximated by Monte-Carlo methods. 
When computational time is highly important, an attractive alternative might be the mean square error (MSE), $S(\bb y, F) := (\bb y - \E_F[\bb X])^2$. The expectation of the pushforward measure $\wh T(F)$ can be Monte-Carlo approximated, and the computational costs are lower, since for the CRPS the term $\E_F[|\bb y - \bb X|]$ needs to be approximated for each observation $\bb y$ separately. However, since the MSE is not strictly proper the resulting score may be less sensitive to miscalibrations.

There are several avenues for potential future research. Generally, assessing the usefulness of this mapping approach in other contexts as mentioned above should be investigated. An important application for point-process-valued forecasts is earthquake rate forecasting, and a variety of methods for forecast validation has been developed in this community. However, prediction so far mostly focused exclusively on the intensity, and moving to probabilistic predictions is an ongoing endeavor, see \cite{Schorlemmer&2018}. This requires the introduction of new evaluation metrics, and the scores presented in this paper could provide an entry point.


\begin{thebibliography}{52}
\providecommand{\natexlab}[1]{#1}
\providecommand{\url}[1]{\texttt{#1}}
\providecommand{\urlprefix}{}

\bibitem[{Abramson(1982)Abramson, Ian S}]{Abramson1982}
Abramson IS.
\newblock On bandwidth variation in kernel estimates-a square root law.
\newblock Ann Stat 1982;p. 1217--1223.

\bibitem[{Baddeley and Silverman(1984)Baddeley, A.J. and Silverman,
  B.W.}]{Baddeley&1984}
Baddeley AJ, Silverman BW.
\newblock A cautionary example on the use of second-order methods for analyzing
  point patterns.
\newblock Biometrics 1984;40(4):1089--1093.

\bibitem[{Baddeley et~al.(2000)Baddeley, A.J. and M{\o}ller, J. and
  Waagepetersen, R.}]{Baddeley&2000}
Baddeley AJ, M{\o}ller J, Waagepetersen R.
\newblock Non- and semi-parametric estimation of interaction in inhomogeneous
  point patterns.
\newblock Stat Neerl 2000;54(3):329--350.

\bibitem[{Baddeley et~al.(2005)Baddeley, A. and Turner, R. and M{\o}ller, J.
  and Hazelton, M.}]{Baddeley&2005}
Baddeley A, Turner R, M{\o}ller J, Hazelton M.
\newblock Residual analysis for spatial point processes.
\newblock J Roy Stat Soc B 2005;67:617--666.

\bibitem[{Baddeley et~al.(2011)Baddeley, A and Rubak, E and M{\o}ller,
  J}]{Baddeley&2011}
Baddeley A, Rubak E, M{\o}ller J.
\newblock Score, pseudo-score and residual diagnostics for spatial point
  process models.
\newblock Stat Sci 2011;26(4):613--646.

\bibitem[{Baddeley et~al.(2015)Adrian Baddeley and Ege Rubak and Rolf
  Turner}]{spatstat}
Baddeley A, Rubak E, Turner R.
\newblock Spatial Point Patterns: Methodology and Applications with {R}.
\newblock London: Chapman and Hall/CRC Press; 2015.
\newblock
  \urlprefix\url{http://www.crcpress.com/Spatial-Point-Patterns-Methodology-and-Applications-with-R/Baddeley-Rubak-Turner/9781482210200/}.

\bibitem[{Brehmer et~al.(2021)Brehmer, Jonas and Gneiting, Tilmann and
  Schlather, Martin and Strokorb, Kirstin}]{Brehmer&2021}
Brehmer J, Gneiting T, Schlather M, Strokorb K.
\newblock Using scoring functions to evaluate point process forecasts.
\newblock arXiv:210311884 2021;.

\bibitem[{Blouin et~al.(2016)Blouin, Karen D and Flannigan, Mike D and Wang,
  Xianli and Kochtubajda, Bohdan}]{Blouin&2016}
Blouin KD, Flannigan MD, Wang X, Kochtubajda B.
\newblock Ensemble lightning prediction models for the province of {A}lberta,
  {C}anada.
\newblock Int J Wildland Fire 2016;25(4):421--432.

\bibitem[{Clements et~al.(2011)Clements, R.A. and Schoenberg, F.P. and
  Schorlemer, D.}]{Clements&2011}
Clements RA, Schoenberg FP, Schorlemer D.
\newblock Residual analysis methods for space-time point processes with
  applications to earthquake forecast models in {C}alifornia.
\newblock Ann Appl Stat 2011;5:2549--2571.

\bibitem[{Daley and Vere-Jones(2007)Daley, D. J. and Vere-Jones,
  D.}]{DaleyVereJones2007}
Daley DJ, Vere-Jones D.
\newblock An introduction to the theory of point processes; volume II: general
  theory and structure.
\newblock Springer Science \& Business Media; 2007.

\bibitem[{Dawid and Sebastiani(1999)Dawid, A. P. and Sebastiani,
  P.}]{DawidSebastiani1999}
Dawid AP, Sebastiani P.
\newblock Coherent dispersion criteria for optimal experimental design.
\newblock Ann Stat 1999;27:65--81.

\bibitem[{Dawid et~al.(2016)Dawid, A Philip and Musio, Monica and Ventura,
  Laura}]{Dawid&2016}
Dawid AP, Musio M, Ventura L.
\newblock Minimum scoring rule inference.
\newblock Scand J Stat 2016;43(1):123--138.

\bibitem[{Diebold and Mariano(1995)Diebold, F X and Mariano, R
  S}]{DieboldMariano1995}
Diebold FX, Mariano RS.
\newblock Comparing predictive accuracy.
\newblock J Bus Econ Stat 1995;13(3):253--263.

\bibitem[{Diggle and Gratton(1984)Diggle, P.J. and Gratton,
  R.J.}]{DiggleGratton1984}
Diggle PJ, Gratton RJ.
\newblock Monte Carlo methods of inference for implicit statistical models.
\newblock J Roy Stat Soc B 1984;46(2):193--212.

\bibitem[{Diggle and Marron(1988)Diggle, Peter and Marron, James
  Stephen}]{DiggleMarron1988}
Diggle P, Marron JS.
\newblock Equivalence of smoothing parameter selectors in density and intensity
  estimation.
\newblock J Am Stat Assoc 1988;83(403):793--800.

\bibitem[{Diggle(2013)Diggle, Peter J}]{Diggle2013}
Diggle PJ.
\newblock Statistical analysis of spatial and spatio-temporal point patterns.
\newblock CRC press; 2013.

\bibitem[{Eberhard et~al.(2012)Eberhard, D A J and Zechar, J D and Wiemer,
  S}]{Eberhard&2012}
Eberhard DAJ, Zechar JD, Wiemer S.
\newblock A prospective earthquake forecast experiment in the western
  {P}acific.
\newblock Geophys J Int 2012;190:1579--1592.

\bibitem[{Edelman(2012)Edelman, A J}]{Edelman2012}
Edelman AJ.
\newblock Positive interactions between desert granivores: localized
  facilitation of harvester ants by kangaroo rats.
\newblock {PloS ONE} 2012;7(2):e30914.

\bibitem[{Ghorbani(2013)Ghorbani, M.}]{Ghorbani2013}
Ghorbani M.
\newblock Cauchy cluster process.
\newblock Metrika 2013;76(5):697--706.

\bibitem[{Gneiting and Raftery(2007)Gneiting, Tilmann and Raftery, Adrian
  E.}]{GneitingRaftery2007}
Gneiting T, Raftery AE.
\newblock Strictly proper scoring rules, prediction, and estimation.
\newblock J Am Stat Assoc 2007;102:359--378.

\bibitem[{Gneiting(2011)Gneiting, T}]{Gneiting2011}
Gneiting T.
\newblock Making and Evaluating Point Forecasts.
\newblock J Am Stat Assoc 2011;106(494):746--762.

\bibitem[{Good(2013)Good, P.}]{Good2013}
Good P.
\newblock Permutation tests: a practical guide to resampling methods for
  testing hypotheses.
\newblock Springer Science \& Business Media; 2013.

\bibitem[{Grier et~al.(1981)Grier, C. C. and Vogt, K. A. and Keyes, M. R. and
  Edmonds, R. L.}]{Grier&1981}
Grier CC, Vogt KA, Keyes MR, Edmonds RL.
\newblock Biomass distribution and above- and below-ground production in young
  mature {\em Abies amabilis} zone ecosystems of the {W}ashington {C}ascades.
\newblock Can J Forest Res 1981;11:155--167.

\bibitem[{Hersbach(2000)Hersbach, H}]{Hersbach2000}
Hersbach H.
\newblock Decomposition of the continuous ranked probability score for ensemble
  prediction systems.
\newblock Weather Forecast 2000;15:559--570.

\bibitem[{Jalilian et~al.(2013)Jalilian, A. and Guan, Y. and Waagepetersen,
  R.}]{Jalilian2013}
Jalilian A, Guan Y, Waagepetersen R.
\newblock Decomposition of variance for spatial {C}ox processes.
\newblock Scand J Stat 2013;40(1):119--137.

\bibitem[{Jalilian et~al.(2020)Jalilian, Abdollah and Safari, Amir and Sohrabi,
  Hormoz}]{Jalilian&2020}
Jalilian A, Safari A, Sohrabi H.
\newblock Modeling spatial patterns and species associations in a {H}yrcanian
  forest using a multivariate log-{G}aussian {C}ox process.
\newblock J Stat Mod Theo App 2020;1(2):1--18.

\bibitem[{Jena et~al.(2020)Jena, Ratiranjan and Pradhan, Biswajeet and Beydoun,
  Ghassan and Sofyan, Hizir and Affan, Muzailin and others}]{Jena&2020}
Jena R, Pradhan B, Beydoun G, Sofyan H, Affan M, et~al.
\newblock Integrated model for earthquake risk assessment using neural network
  and analytic hierarchy process: {A}ceh province, {I}ndonesia.
\newblock Geosci Front 2020;11(2):613--634.

\bibitem[{Kr{\"u}ger et~al.(2021)Kr{\"u}ger, Fabian and Lerch, Sebastian and
  Thorarinsdottir, Thordis and Gneiting, Tilmann}]{Krueger&2021}
Kr{\"u}ger F, Lerch S, Thorarinsdottir T, Gneiting T.
\newblock Predictive inference based on {M}arkov chain {M}onte {C}arlo output.
\newblock Int Stat Rev 2021;89(2):274--301.

\bibitem[{Laio and Tamea(2007)Laio, Francesco and Tamea,
  Stefania}]{LaioTamea2007}
Laio F, Tamea S.
\newblock Verification tools for probabilistic forecasts of continuous
  hydrological variables.
\newblock Hydrol and Earth Syst Sc 2007;11(4):1267--1277.

\bibitem[{Lerch et~al.(2017)Lerch, S and Thorarinsdottir, T L and Ravazzolo, F
  and Gneiting, T}]{Lerch&2017}
Lerch S, Thorarinsdottir TL, Ravazzolo F, Gneiting T.
\newblock Forecaster's dilemma: extreme events and forecast evaluation.
\newblock Stat Sci 2017;32(1):106--127.

\bibitem[{Mat{\'e}rn(1960)Mat{\'e}rn, Bertil}]{Matern1960}
Mat{\'e}rn B.
\newblock Spatial variation: Meddelanden fran statens skogsforskningsinstitut.
\newblock Lecture Notes in Statistics 1960;36:21.

\bibitem[{M{\o}ller et~al.(1998)M{\o}ller, Jesper and Syversveen, Anne Randi
  and Waagepetersen, Rasmus Plenge}]{Moeller1998}
M{\o}ller J, Syversveen AR, Waagepetersen RP.
\newblock Log {G}aussian {C}ox processes.
\newblock Scand J Stat 1998;25(3):451--482.

\bibitem[{Mohler et~al.(2011)Mohler, G O and Short, M B and Brantingham, P J
  and Schoenberg, F P and Tite, G E}]{Mohler&2011}
Mohler GO, Short MB, Brantingham PJ, Schoenberg FP, Tite GE.
\newblock Self-exciting point process modeling of crime.
\newblock J Am Stat Assoc 2011;106(493):100--108.

\bibitem[{M{\o}ller and Waagepetersen(2003)M{\o}ller, J. and Waagepetersen,
  R.P.}]{MoellerWaagepetersen2003}
M{\o}ller J, Waagepetersen RP.
\newblock Statistical inference and simulation for spatial point processes.
\newblock Chapman and Hall/CRC; 2003.

\bibitem[{M{\"o}ller et~al.(2013)M{\"o}ller, Annette and Lenkoski, Alex and
  Thorarinsdottir, Thordis L}]{Moeller&2013}
M{\"o}ller A, Lenkoski A, Thorarinsdottir TL.
\newblock Multivariate probabilistic forecasting using ensemble {B}ayesian
  model averaging and copulas.
\newblock Q J Roy Meteor Soc 2013;139(673):982--991.

\bibitem[{Pourtaheri and Vahidi-Asl(2011)Pourtaheri, R and Vahidi-Asl, M
  Q}]{PourtaheriVahidi-Asl2011}
Pourtaheri R, Vahidi-Asl MQ.
\newblock Point pattern analysis of regional city distributions.
\newblock Qual Quant 2011;45:1473--1481.

\bibitem[{Ripley(1976)Ripley, Brian D}]{Ripley1976}
Ripley BD.
\newblock The second-order analysis of stationary point processes.
\newblock J Appl Probability 1976;13(2):255--266.

\bibitem[{Ripley(1977)Ripley, Brian D}]{Ripley1977}
Ripley BD.
\newblock Modelling spatial patterns.
\newblock J R Stat Soc B 1977;39(2):172--192.

\bibitem[{Scheuerer and Hamill(2015)Scheuerer, Michael and Hamill, Thomas
  M}]{ScheuererHamill2015}
Scheuerer M, Hamill TM.
\newblock Variogram-Based Proper Scoring Rules for Probabilistic Forecasts of
  Multivariate Quantities.
\newblock Mon Weather Rev 2015;143(4):1321--1334.

\bibitem[{Schorlemmer et~al.(2018)Schorlemmer, D. and Werner, M.J. and
  Marzocchi, W. and Jordan, T.H. and Ogata, Y. and Jackson, D.D. and Mak, S.
  and Rhoades, D.A. and Gerstenberger, M.C. and Hirata, N. and
  others}]{Schorlemmer&2018}
Schorlemmer D, Werner MJ, Marzocchi W, Jordan TH, Ogata Y, Jackson DD, et~al.
\newblock The Collaboratory for the Study of Earthquake Predictability:
  achievements and priorities.
\newblock Seismol Res Lett 2018;89(4):1305--1313.

\bibitem[{Stoyan et~al.(2000)Stoyan, Dietrich and Penttinen, Antti and
  others}]{StoyanPenttinen2000}
Stoyan D, Penttinen A, et~al.
\newblock Recent applications of point process methods in forestry statistics.
\newblock Stat Sci 2000;15(1):61--78.

\bibitem[{Thomas(1949)Thomas, Marjorie}]{Thomas1949}
Thomas M.
\newblock A generalization of Poisson's binomial limit for use in ecology.
\newblock Biometrika 1949;36(1/2):18--25.

\bibitem[{Thorarinsdottir et~al.(2016)Thorarinsdottir, T.L. and Scheuerer, M.
  and Heinz, C.}]{Thorarinsdottir&2016}
Thorarinsdottir TL, Scheuerer M, Heinz C.
\newblock Assessing the calibration of high-dimensional ensemble forecasts
  using rank histograms.
\newblock J Comput Graph Stat 2016;25(1):105--122.

\bibitem[{Thorarinsdottir and Schuhen(2018)Thorarinsdottir, T.L. and Schuhen,
  N.}]{ThorarinsdottirSchuhen2018}
Thorarinsdottir TL, Schuhen N.
\newblock Verification: assessment of calibration and accuracy.
\newblock In: Statistical Postprocessing of Ensemble Forecasts Elsevier;
  2018.p. 155--186.

\bibitem[{Veen and Schoenberg(2006)Veen, Alejandro and Schoenberg, Frederic
  Paik}]{VeenSchoenberg2006}
Veen A, Schoenberg FP.
\newblock Assessing spatial point process models using weighted $K$-functions:
  analysis of {C}alifornia earthquakes.
\newblock In: Case Studies in Spatial Point Process Modeling Springer; 2006.p.
  293--306.
  
\bibitem[{Vel{\'a}zquez et~al.(2016)Vel{\'a}zquez, Eduardo and Mart{\'\i}nez,
  Isabel and Getzin, Stephan and Moloney, Kirk A and Wiegand,
  Thorsten}]{Velazquez&2016}
Vel{\'a}zquez E, Mart{\'\i}nez I, Getzin S, Moloney KA, Wiegand T.
\newblock An evaluation of the state of spatial point pattern analysis in
  ecology.
\newblock Ecography 2016;39(11):1042--1055.

\bibitem[{Waagepetersen(2007)Waagepetersen, R.P.}]{Waagepetersen2007}
Waagepetersen RP.
\newblock An estimating function approach to inference for inhomogeneous
  {N}eyman--{S}cott processes.
\newblock Biometrics 2007;63(1):252--258.

\bibitem[{Waagepetersen et~al.(2016)Waagepetersen, Rasmus and Guan, Yongtao and
  Jalilian, Abdollah and Mateu, Jorge}]{Waagepetersen&2016}
Waagepetersen R, Guan Y, Jalilian A, Mateu J.
\newblock Analysis of multispecies point patterns by using multivariate
  log-Gaussian Cox processes.
\newblock J Roy Stat Soc C-App 2016;65(1):77--96.

\bibitem[{Waller et~al.(2011)Waller, L A and S{\"a}rkk{\"a}, A and Olsbo, V and
  Myllym{\"a}ki, M and Panoutsopulou, I G and Kennedy, W R and
  Wendelschafer-Crabb, G}]{Waller&2011}
Waller LA, S{\"a}rkk{\"a} A, Olsbo V, Myllym{\"a}ki M, Panoutsopulou IG,
  Kennedy WR, et~al.
\newblock Second-order spatial analysis of epidermal nerve fibers.
\newblock Stat Med 2011;30(23):2827--2841.

\bibitem[{Wiegand et~al.(2017)Wiegand, Thorsten and Uriarte, Mar{\'\i}a and
  Kraft, Nathan JB and Shen, Guochun and Wang, Xugao and He,
  Fangliang}]{Wiegand&2017}
Wiegand T, Uriarte M, Kraft NJ, Shen G, Wang X, He F.
\newblock Spatially explicit metrics of species diversity, functional
  diversity, and phylogenetic diversity: Insights into plant community assembly
  processes.
\newblock Annu Rev Ecol Evol S 2017;48:329--351.

\bibitem[{Zhang et~al.(2019)Zhang, Kaiquan and Shen, Zhan and Yang, Xinchao and
  Ma, Luyi and Duan, Jie and Li, Yun}]{Zhang&2019}
Zhang K, Shen Z, Yang X, Ma L, Duan J, Li Y.
\newblock Spatial Patterns in Different Stages of Regeneration after
  Clear-Cutting of a Black Locust Forest in Central China.
\newblock Forests 2019;10(12):1066.

\bibitem[{Zhuang(2006)Zhuang, J.}]{Zhuang2006}
Zhuang J.
\newblock Second-order residual analysis of spatiotemporal point processes and
  applications in model evaluation.
\newblock J Roy Stat Soc B 2006;68(4):635--653.
\end{thebibliography}

\end{document}